\newcommand{\ignore}[1]{}
\newtheorem{observation}{Observation}
\newtheorem{lemma}{Lemma}
\newtheorem{theorem}{Theorem}
\newtheorem{corollary}{Corollary}
\newcommand{\Z}{\mathbb{Z}}
\newcommand{\R}{\mathbb{R}}
\newcommand{\N}{\mathbb{N}}
\newcommand{\Q}{\mathbb{Q}}
\newcommand{\C}{\mathbb{C}}
\renewcommand{\S}{\mathbb{S}}
\renewcommand{\epsilon}{\varepsilon}
\newcommand{\B}{\mathbf{B}}
\begin{document}
	
	\title{Rational Points on the Unit Sphere: Approximation Complexity and Practical Constructions}
	\subtitle{Improved Analysis\cite{self}}
	
	\author{ Daniel Bahrdt	\thanks{Formale Methoden der Informatik, University of Stuttgart,  Germany, {\{bahrdt, seybold\}@fmi.uni-stuttgart.de}} \and Martin P.  Seybold\footnotemark[1]}
	
	\maketitle
	\subsubsection*{Abstract} 	\vspace{-.4cm}
		Each non-zero point in $\R^d$ identifies a closest point $x$ on the unit sphere $\S^{d-1}$.
		We are interested in computing an $\epsilon$-approximation $y \in \Q^d$ for $x$, that is \emph{exactly} on $\S^{d-1}$ and has low bit size. We revise lower bounds on rational approximations and provide explicit, spherical instances.
		
		We prove that floating-point numbers can only provide trivial solutions to the sphere equation in $\R^2$ and $\R^3$. Moreover, we show how to construct a rational point with denominators of at most $10(d-1)/\varepsilon^2$ for any given $\epsilon \in \left(0,\tfrac 1 8\right]$, improving on a previous result.
		The method further benefits from algorithms for simultaneous Diophantine approximation.
		
		Our open-source implementation and experiments demonstrate the practicality of our approach in the context of massive data sets Geo-referenced by latitude and longitude values.
	\vspace{-.4cm}
	\paragraph*{Keywords}
	{ Diophantine approximation,
		Rational points,  
		Unit sphere,
		Perturbation, 
		Stable geometric constructions}
	
\begin{figure}[h]
	\centering
	\includegraphics[clip,trim=420px 220px 410px 200px,width=\columnwidth]{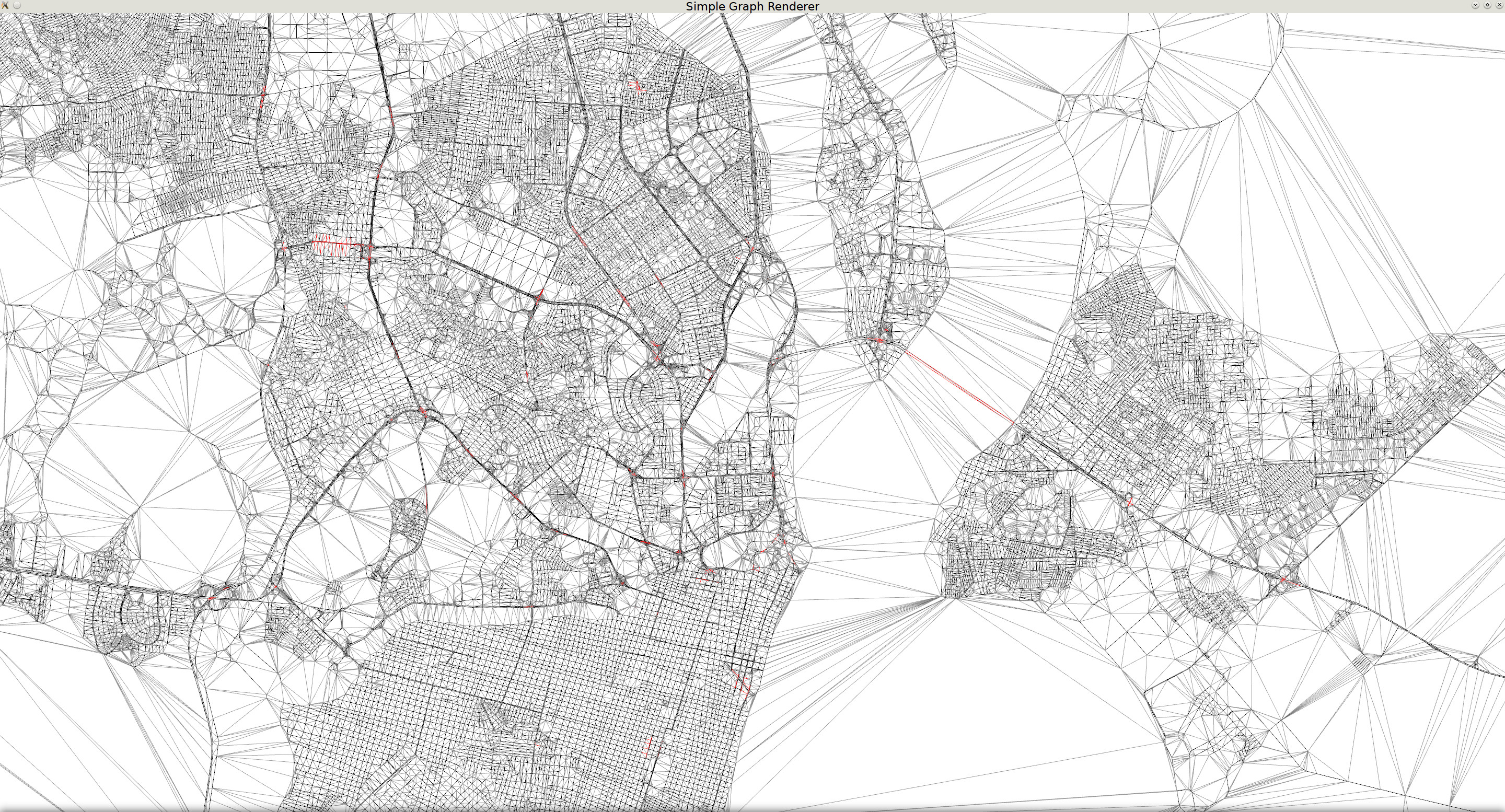}
	\vspace{-.5cm}	
	\caption{Spherical Delaunay triangulation (gray) constrained to contain all line segments (black) of streets in Ecuador and the intersection points of constraints (red).}
	\vspace{-.3cm}
	\label{figMotivation}
\end{figure}

\section{Introduction}
Many mathematical sciences use trigonometric functions in symbolic coordinate transformations to simplify fundamental equations of physics or mathematical systems.
However, rational numbers are dominating in computer processing as they allow for simple storage as well as fast exact and inexact arithmetics (e.g. GMP\cite{Granlund12}, IEEE Float, MPFR\cite{Fousse:2007:MMB:1236463.1236468}).
Therefore problems on spherical surfaces often require to scale a point vector, as in choosing a point  uniform at random\cite{marsaglia1972choosing}, or to evaluate a trigonometric function for a rational angle argument, as in dealing with Geo-referenced data.

A classical theoretical barrier is Niven's theorem\cite{book/niven/irrationalNumbers}, 
 which states that the sole rational values of sine for rational multiplies of $\pi$ are $0, \pm1/2$ and $\pm 1$.
The well known Chebyshev polynomials have roots at these values, hence give rise to representations for these algebraic numbers.
However, arithmetics in a full algebraic number field might well be too demanding for many applications.
For products of sine and cosine, working with Euler's formula on the complex unit circle and Chebyshev polynomials would suffice though.

This manifests in problems of \emph{exact} geometrical computations, since standard methodology relies on Cartesian input\cite{DBLP:journals/jlp/LiPY05}.
Spheres and ellipsoids are common geometric objects and rational solutions to their defining quadratic polynomials are closely related to Diophantine equations of degree $2$.
The famous Pythagorean Triples are known to identify the rational points on the circle $\S^1$.
Moreover, the unit sphere has a dense set of rational points and so do ellipsoids with rational half-axes through scaling.
Spherical coordinates are convenient to reference such Cartesians with angle coordinates and \emph{geo-referenced data} denotes points with rational angles.
Standard approximations of Cartesians do not necessarily fulfill these equations, therefore subsequent algorithmic results can suffer greatly.	
	
This paper focuses on finding rational points \emph{exactly on} the unit sphere $\S^{d-1}=\left\{x \in \R^d ~:~ \sum_{i} x_i^2 = 1\right\}$ with bounded distance to the point $x / \lVert x \rVert_2$ -- its closest point on $\S^{d-1}$.
In this work, $x \in \R^d$ can be given by any finite means that allow to compute a rational approximation to it with arbitrary target precision.
Using rational Cartesian approximations for spherical coordinates, as derived from MPFR, is just one example of such a black-box model.
Moreover, we are interested in calculating rational points on $\S^d$ with small denominators.

\subsection{Related Work} \label{secRelatedWork}
Studies on spherical Delaunay triangulations (SDT), using great-circle segments on the sphere $\S^2$, provide common ways to avoid and deal with the point-on-sphere problem in computational geometry.

The \emph{fragile} approaches
\cite{DBLP:conf/ppam/PrillZ15, gmd-6-1353-2013, 1997-Renka-STRIPACK}
ignore that the input may not be on $\S^2$ and succeed if the results of all predicate evaluations happen to be correct.
Input point arrangements with close proximity or unfortunate locations bring these algorithms to crash, loop or produce erroneous output.
The \emph{quasi-robust} approaches \cite{1979-Brown, 1996-Barber-QHULL} weaken the objective and calculate a Delaunay tessellation in $d$-Simplexes.
Lifting to a $d+1$ convex hull problem is achieved by augmenting a rational coordinate from a quadratic form -- The augmented point exactly meets the (elliptic) paraboloid equation.
However, the output only identifies a SDT if all input points are already on the sphere, otherwise the objectives are distinct. Equally unclear is how to address spherical predicates and spherical constructions.
The \emph{robust} approaches \cite{saalfeldStereoDelaunay} use the circle preserving stereographic projection from $\S^2$ to the plane.
The perturbation to input, for which the output is correct, can be very large as the projection does not preserve distances. 
Furthermore, achieving additional predicates and constructions remains unclear.
The \emph{stable} approaches provide geometric predicates and constructions for points on $\S^2$ by explicitly storing an algebraic number, originating from scaling an ordinary rational approximation to unit length\cite{2009-CastroCLT}.
Algebraic number arithmetics can be avoided for $\S^2$, but exact evaluation relies on specifically tailored predicates \cite{Teillaud2010}, leaving the implementation of new constructions and predicates open.

Kleinbock and Merrill provide methods to quantify the density of rational points on $\S^d$ \cite{Kleinbock2015}, that extend to other manifolds as well.
Recently, Schmutz\cite{Schmutz2008} provided an divide-\&-conquer approach on the sphere equation, using Diophantine approximation by continued fractions, to derive points in ${\Q^d \cap \S^{d-1}}$ for a point on the unit sphere $\S^{d-1}$.
The main theorem bounds the denominators in $\varepsilon$-approximations, under the $\lVert~\rVert_\infty$ norm, with
$
( \sqrt{32} \lceil \log_2 d \rceil / \varepsilon  )^{2\lceil \log_2 d \rceil }
$.
Based on this, rational approximations in the orthogonal group $O(n,\R)$ and in the unitary matrix group $U(n,\C)$ are found. 
This is of particular interest for sweep-line algorithms:
\cite{92-Canny-RatRot} studies finding a rotation matrix with small rationals for a given rational rotation angle of an $2$D arrangement.

\subsection{Contribution}
The strong lower bound on rational approximations to other rational values does not hold for Geo-referenced data considering Niven's theorem.
We derive explicit constants to Liouville's lower bound, for a concrete Geo-referenced point, that is within a factor $2$ of the strong lower bound.
Moreover, we prove that floating-point numbers \emph{cannot} represent Cartesian coordinates of points that are exactly on $\S^1$ or $\S^2$.

We describe how the use of rotation symmetry and approximations with fixed-point numbers suffice to improve on the main theorem of \cite{Schmutz2008}.
We derive rational points \emph{exactly on} $\S^{d-1}$ with denominators of at most
$10(d-1)/\varepsilon^2 $
for any $\varepsilon \in \left(0,\tfrac 1 8\right]$.
Moreover, our method allows for even smaller denominators based on algorithms for simultaneous Diophantine approximations, though a potentially weaker form of approximation would suffice.

The controlled perturbations provided by our method allow exact geometric algorithms on $\S^d$ to rely on rational rather than algebraic numbers -- E.g. enabling convex hull algorithms to efficiently obtain spherical Delaunay triangulations on $\S^d$ and not just Delaunay tessellations.
Moreover, the approach allows for inexact but $\varepsilon$-stable geometric constructions -- E.g. intersections of Great Circle segments.

We demonstrate the quality and effectiveness of the method on several, including one whole-world sized, point sets.
We provide open-source implementations for the method and its application in the case of spherical Delaunay triangulations with intersections of constraints.

\section{Definitions and Tools}
The $2$nd \emph{Chebyshev polynomials} $U_n$ of degree $n$ are in $\Z[X]$, given their recursive definition:
\begin{align*}
U_0(x)&=1 \quad \quad U_1(x)=2x\\
U_{n+1}(x)&=2xU_n(x)-U_{n-1}(x)\quad.
\end{align*}
It is well known \cite{rivlin74}, that the $n$ roots of $U_n$ are exactly the values  
$
	\left\{~\cos \left({\pi k / \left(n+1\right)} \right)~:~k=1,\ldots ,n ~\right\}
$.
Hence the polynomials $U_n$ give rise to algebraic representations for cosine values of rational multiplies of $\pi$.
This is particularly useful in conjunction with classic results on \emph{Diophantine approximations}, that are known since 1844\cite{1851-Liouville}:

\begin{theorem}[Liouville's Lower Bound] \label{thmLiouville}
	For any algebraic $\alpha \in \R$ of degree $n \geq 2$,
there is a positive constant $c(\alpha)> 0$ such that 
	\[
	\Big| \alpha - \frac{p}{q} \Big| \geq \frac{c(\alpha)}{q^n}
	\]
for any $p \in \Z$ and $q \in \N$.
\end{theorem}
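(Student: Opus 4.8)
The plan is to follow the classical argument through the minimal polynomial of $\alpha$. I would let $f \in \Z[X]$ be an integer polynomial of degree $n$ with $f(\alpha)=0$ — e.g.\ the minimal polynomial of $\alpha$ cleared of denominators. Because $\alpha$ has degree $n \geq 2$ it is irrational, so $f$ has no rational root; in particular $f(p/q)\neq 0$ for every $p\in\Z$, $q\in\N$. I would then split on whether $p/q$ is close to $\alpha$: if $\lvert \alpha - p/q\rvert > 1$ the claimed inequality is trivial for any $c(\alpha)\le 1$, since $q^n\ge 1$; so it remains to treat the case $p/q \in I := [\alpha-1,\alpha+1]$.

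For such $p/q$ the argument has an arithmetic half and an analytic half. Arithmetically, $q^n f(p/q)$ is a nonzero integer, hence $\lvert f(p/q)\rvert \ge q^{-n}$. Analytically, the mean value theorem applied to $f$ on the segment between $\alpha$ and $p/q$ (which lies in $I$) gives $f(p/q) = f(p/q) - f(\alpha) = f'(\xi)\,(p/q - \alpha)$ for some $\xi \in I$, so $\lvert f(p/q)\rvert \le M\,\lvert \alpha - p/q\rvert$ where $M := \max_{t\in I}\lvert f'(t)\rvert$. This $M$ is finite (continuity on a compact set) and strictly positive, as $f'$ is a nonzero polynomial of degree $n-1$ and thus has only finitely many zeros. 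Chaining the two inequalities yields $\lvert \alpha - p/q\rvert \ge 1/(M q^n)$, and setting $c(\alpha) := \min\{1,\,1/M\}$ — a quantity depending only on $\alpha$ — covers both cases.

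The only real subtlety, and hence the ``main obstacle'', is the step $f(p/q)\neq 0$: this is exactly where the hypothesis $n\ge 2$ enters, since a degree-$1$ algebraic number is rational and then no such bound can hold. Everything else is bookkeeping; in particular one should check that $c(\alpha)$ is extracted independently of $p$ and $q$, which it is, because $M$ depends only on $f$ and on the fixed interval $I$. If one wants the explicit constants that later sections need, the same proof delivers them: one simply tracks $M$ as a concrete bound on $\lvert f'\rvert$ over $I$ for the specific $\alpha$ in question.
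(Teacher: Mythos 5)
Your proof is correct and classical, but it takes a genuinely different analytic step than the paper. The paper factors the minimal polynomial as $f(X) = (X-\alpha)\,g(X)$ over $\C[X]$ and bounds $\lvert g\rvert$ by a constant $c_1$ on a small interval of radius $c_2$ around $\alpha$, then argues by contrapositive: if $\lvert\alpha - p/q\rvert < c/q^n$ with $c = \min\{c_2, 1/c_1\}$, then $\lvert q^n f(p/q)\rvert < 1$ forces $f(p/q)=0$ and hence $p/q=\alpha$, impossible. You instead keep $f$ unfactored and apply the mean value theorem, $\lvert f(p/q)\rvert = \lvert f'(\xi)\rvert\,\lvert\alpha - p/q\rvert$ with $M=\max_I\lvert f'\rvert$, chained directly against $\lvert f(p/q)\rvert \ge q^{-n}$. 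The two are essentially equivalent — indeed $f'(\alpha)=g(\alpha)$, so the constants agree to first order — but the paper's choice is not arbitrary: Section~\ref{sectLowerBounds} explicitly reuses the factor $g(x)=x-r_1$ to read off a concrete $c_1=2$, $c_2=1/2$ for $\cos(108^\circ)$, which is cleaner than estimating $f'$ over the interval. One small imprecision in your write-up: the inference ``$\alpha$ has degree $n\ge 2$, so it is irrational, so $f$ has no rational root'' does not follow as stated — irrationality of $\alpha$ only excludes $\alpha$ itself. What you actually need is that any integer polynomial of degree $n$ vanishing at a degree-$n$ algebraic number is a scalar multiple of its minimal polynomial, hence irreducible over $\Q$, hence (for $n\ge 2$) has no rational roots at all; that is the fact that makes $q^n f(p/q)$ a \emph{nonzero} integer. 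Your explicit case split $\lvert \alpha - p/q\rvert > 1$ versus $p/q \in I$ is a nice touch that the paper handles implicitly by folding $c_2$ into the definition of $c$.
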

Apart from this lower bound on rational approximations, there is another important folklore result on the existence of simultaneous Diophantine approximations.
Such approximations have surprisingly small errors, despite their rather small common denominator.
\begin{theorem}[Dirichlet's Upper Bound] \label{thmDirichlet}
	Let $N \in \N$ and $\alpha \in\R^d$ with $0 \leq \alpha_i \leq 1$.
	There are integers $p \in \Z^d, ~q \in \Z$ with $1 \leq q \leq N$ and 
	$$\left|\alpha_i - \tfrac {p_i}q\right|\leq \frac{1}{q\sqrt[d]{N}}~.$$
\end{theorem}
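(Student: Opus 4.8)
The plan is to prove this by the pigeonhole principle --- \emph{Dirichlet's box principle} --- which is precisely where the theorem gets its name. First I would form the $N+1$ vectors of fractional parts
\[
v_q \;:=\; \bigl(\,\{q\alpha_1\},\,\ldots\,,\{q\alpha_d\}\,\bigr)\ \in\ [0,1)^d,\qquad q=0,1,\ldots,N ,
\]
and partition the half-open unit cube $[0,1)^d$ into $N$ congruent axis-parallel sub-cubes of side length $1/\sqrt[d]{N}$. Since there are $N+1$ vectors but only $N$ cells, two of them, say $v_{q_1}$ and $v_{q_2}$ with $0\le q_1<q_2\le N$, must fall into the same sub-cube.

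Next I would read the approximation off this collision. Put $q:=q_2-q_1$, so that $1\le q\le N$, and $p_i:=\lfloor q_2\alpha_i\rfloor-\lfloor q_1\alpha_i\rfloor\in\Z$ for each $i$. A one-line computation gives $q\alpha_i-p_i=\{q_2\alpha_i\}-\{q_1\alpha_i\}$, and because $v_{q_1}$ and $v_{q_2}$ sit in a common sub-cube of side $1/\sqrt[d]{N}$, each of these coordinate differences is at most $1/\sqrt[d]{N}$ in absolute value. Dividing by $q$ yields $\lvert \alpha_i - p_i/q\rvert \le 1/(q\sqrt[d]{N})$ for every $i$, which is the claim. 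The hypothesis $0\le\alpha_i\le1$ plays no role in this inequality itself, but it does force $0\le p_i\le q$, so that each $p_i/q$ again lands in $[0,1]$.

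I do not expect a genuine obstacle here, since the statement is folklore; the only point that needs a little care is the partition step, which splits $[0,1)^d$ into exactly $N$ congruent cubes only when $\sqrt[d]{N}$ is an integer. For general $N$ I would instead take $M:=\lfloor\sqrt[d]{N}\rfloor$ together with the $M^d\le N$ cubes of side $1/M$; pigeonhole still applies to the $N+1\ge M^d+1$ vectors and delivers the bound with $M$ in place of $\sqrt[d]{N}$, which is harmless for the later use, where $N$ will be chosen to be a perfect $d$-th power. As an alternative derivation I could invoke Minkowski's convex body theorem for the centrally symmetric convex set $\{\,(x_0,\ldots,x_d) : |x_0|\le N,\ |\alpha_i x_0 - x_i|\le N^{-1/d}\ \text{for all }i\,\}$, whose volume equals $2^{d+1}$ and which therefore contains a nonzero integer point $(q,p_1,\ldots,p_d)$; for $N\ge 2$ this point necessarily has $x_0\ne 0$, and after possibly negating it one gets $1\le q\le N$ with the same estimate. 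The pigeonhole argument is shorter and entirely self-contained, so that is the route I would write up.
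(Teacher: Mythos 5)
Your proof is the same pigeonhole argument the paper gives in Appendix~\ref{apxProofDirichlet}: form the fractional-part vectors of $q\alpha$, tile $[0,1)^d$ into small congruent sub-cubes, find two vectors in the same cell, and take differences to get $q$ and the $p_i$. You are in fact more careful than the paper on one real point: tiling $[0,1)^d$ into $N$ cubes of side $N^{-1/d}$ presumes $\sqrt[d]{N}\in\N$, which the paper silently assumes (its phrasing ``$N^d$ regular $d$-cubes of length $L=\sqrt[d]{N}$'' with $N^d+1$ sample points is not even internally consistent with the bound it then derives --- one should read $N$ cells of side $1/L$ with $N+1$ indices); your remark about replacing $\sqrt[d]{N}$ by $\lfloor\sqrt[d]{N}\rfloor$ for general $N$ is a legitimate caveat, and the Minkowski-body alternative you sketch also checks out but is not the route the paper takes.
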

	(See Appendix \ref{apxProofDirichlet} for a proof.)
For $d=1$, the continued fraction (equivalently the Euclidean) algorithm is famous \cite{hardy54} for finding approximations with $ \left| \alpha - p/q\right| \leq 1/2q^2$.
This spurred the field of number theory to study generalizations of the continued fraction algorithm that come close to Dirichlet's upper bound, but avoid brute-force calculations.
Some more recent methods are discussed in Section \ref{secDenomSize}.

Our approach uses the \emph{Stereographic Projection} in $\R^d$.
Let $p=(0,\dots,0,1) \in \R^d$ be the fixed point for the projection $\tau$, mapping all points of a ray from $p$ to the intersection with the hyperplane $x_d = 0$.
\begin{align*}
\tau : \R^d\setminus(\R^{d-1}\times\{1\}) &\to \R^{d-1}  \\
x &\mapsto \Big(\frac{x_1}{1-x_d}~,~\dots~,~\frac{x_{d-1}}{1-x_d} \Big)
\end{align*}
The surjective mapping $\tau$ is injective as well, when restricted to the domain $\S^{d-1}\setminus\{p\}$.
We further define the mapping $\sigma$, which is
\begin{align*}
\sigma : \R^{d-1} &\to \R^d\setminus\{p\}    \\
x &\mapsto \Big( ~\frac{2x_1}{1+S^2}~,~\dots~,~\frac{2x_{d-1}}{1+S^2}~,~\frac{-1+S^2}{1+S^2}~\Big) 
\end{align*}
where $S^2 = \sum_{j=1}^{d-1}x_j^2 $.
We have  $\operatorname{img} \sigma \subseteq \S^{d-1}$, since 
$$\lVert \sigma(x) \rVert_2^2= \frac{(-1+S^2)^2 + \sum_{i=1}^{d-1} (2x_i)^2}{(1+S^2)^2} = 1\quad.$$
Furthermore, $ x = \tau \circ \sigma( x)$ for all $x\in \R^{d-1}$, since
$$
\left(\tau \circ \sigma \right)_i(x) =
\frac{\frac{2x_i}{1+S^2}}{1-\frac{-1+S^2}{1+S^2}} = \frac{2x_i}{1+S^2+1-S^2} =x_i
$$
holds for all $1 \leq i < d $.
Hence, $\sigma$ and $\tau$ are inverse mappings.
Note that images of rational points remain rational in both mappings, establishing a bijection between rational points in $\R^{d-1}$ and $\S^{d-1}$.

\subsection{Lower Bounds and Instances for Geo-referenced Data on $\S^d$} \label{sectLowerBounds}
It is well known in Diophantine approximation that rational numbers have algebraic degree $1$ and are hard (in the following qualitative sense) to approximate with other rational numbers.
The following folklore observation is an analog to Liouville's lower bound.

\begin{observation}
	For  rational numbers $\tfrac ab \neq \tfrac pq$, we have
	$$
	\left| \frac{a}{b}- \frac{p}{q} \right| =	\left| \frac{aq-bp}{bq} \right| \geq \frac{1}{bq}
	$$
\end{observation}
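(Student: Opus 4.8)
The plan is to clear denominators and then invoke integrality. Writing both fractions over the common denominator $bq$ immediately gives the displayed identity
$$\frac ab - \frac pq = \frac{aq - bp}{bq},$$
so the whole task reduces to bounding the right-hand side away from $0$. First I would observe that the numerator $aq - bp$ is an \emph{integer}, being a $\Z$-linear combination of the integers $a,b,p,q$. Next, since $\tfrac ab \neq \tfrac pq$ by hypothesis, this integer is nonzero. The crucial — and only — arithmetic fact is then that a nonzero integer has absolute value at least $1$, so $|aq - bp| \geq 1$. Dividing through by $|bq| = bq$ yields $\left| \tfrac{aq-bp}{bq} \right| \geq \tfrac{1}{bq}$, which is exactly the claimed inequality.

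The single bookkeeping point worth mentioning is the sign normalization of the denominator: I would assume without loss of generality that $b,q \in \N$, which is harmless because every rational number admits a representation with positive denominator (absorb any sign into the numerator). After that, $|bq| = bq$ and the estimate is literal rather than up to sign.

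I do not expect any real obstacle here; the statement is deliberately the $n=1$, rational-number analogue of Liouville's lower bound (Theorem \ref{thmLiouville}), and the point of writing it out is precisely that the "Liouville constant" can be taken to be $c = 1$ for free, since the integers are uniformly $1$-separated from $0$. This is the feature that fails for algebraic numbers of degree $n \geq 2$ and forces the genuine content of Liouville's theorem; contrasting the two is what motivates the explicit Geo-referenced instances discussed in the remainder of Section~\ref{sectLowerBounds}.
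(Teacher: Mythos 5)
Your proof is correct and matches the paper's (implicit) reasoning exactly: the paper presents this as a folklore observation whose justification is precisely the common-denominator rewrite plus the fact that a nonzero integer has absolute value at least $1$. Your added remark about normalizing $b,q>0$ is a sensible bit of bookkeeping that the paper leaves tacit.
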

If $q<b$, we have a lower bound of $1/q^2$ for rational approximations to $\tfrac ab$ with denominators up to $q$.
Pythagorean triples $(x,y,z) \in \N^3$ provide such rational points on $\S^1$, since $(x/z)^2 + (y/z)^2 =1$.
We have a lower bound of $1/z^2$ for approximations with denominators $q<z$.
See Section \ref{secDenomSize} for rational points on $\S^d$ with the same denominator property.

The situation might look different when dealing with Geo-re\-fer\-enced data (rational angle arguments) only.
However, using Chebyshev's polynomials in conjunction with Liouville's lower bound (c.f. Theorem \ref{thmLiouville}) allows to derive explicit constants for Diophantine approximations of $\cos\left(108^\circ\right)$.

Given spherical coordinates, the first coordinate of a point on $\S^{d}$ might well have algebraic values of $r_i= \cos(\tfrac i5\pi)$ for $i \in \{1,2,3,4\}$.
\begin{align*} 
(r_1,r_2,r_3,r_4)
&=\left( \frac{1+\sqrt{5}}{4}, \frac{-1+\sqrt{5}}{4}, \frac{1-\sqrt{5}}{4}, \frac{-1-\sqrt{5}}{4}\right)\\
&\approx( +0.8090, +0.3090, -0.3090, -0.8090)
\end{align*}
Over $\Z[X]$, the polynomial $U_4(x)=16x^4-12x^2+1$ has the irreducible factors
\begin{align*}
U_4(x) = \underbrace{(4x^2 - 2x + 1)}_{=:f(x)}(4x^2 + 2x - 1)
\end{align*}
Since $r_1$ and $r_3$ are the roots of the polynomial $f$, they have algebraic degree $n=2$.

Using Liouville's lower bound for $r_3$, we have for all $\tfrac pq \in \Q$
\[
\Big| r_3 - \frac{p}{q} \Big| \geq \frac{\min\{c_2, \tfrac 1{c_1}\}}{q^n} \quad,
\]
with constants $c_1$ and $c_2$ according to the proof of Liouville's Theo\-rem\cite{1851-Liouville}.
The constants $c_1,c_2>0$ exist, since the polynomial division of $f$ with the linear factor $(x-r_3)$ results in the continuous function $g(x)=(x-r_1)$.
For $c_2 = 1/2 < \sqrt{5}/2$, the interval $I:=[r_3- c_2, r_3 +c_2] \subseteq \R$ is sufficiently small to exclude different roots of $f$ and the inequality
\begin{align*}
\max_{x\in I} \big|g(x)\big| = 	\max_{x\in I} \big|x - r_1\big| < c_1
\end{align*}
is met with a generous choice of $c_1=2$.
This leads to an explicit lower bound on the approximation error to $r_3$ with denominators $q$ of
\[
\left| \cos\left(108^\circ\right) - \frac{p}{q} \right| \geq \frac{1}{2 \cdot q^2} ~.
\]

\section{Results}
Apart from integers, contemporary computing hardware heavily relies on \emph{floating point numbers}.
These are triplets $(s,m,e)$ with $s \in \{0,1\}$, $m \in \{0,\dots,2^l-1\}$ and $e \in \{-2^{k-1}+1,\dots, 2^{k-1}-1\}$.
The IEEE standard for Float is $(l,k)=(23,8)$ and $(52,11)$ for Double.
The rational number described by such a triplet is 
\begin{align*}
\operatorname{val}(s,m,e) = (-1)^s \cdot
\begin{cases}
\dfrac{2^l + m}{2^l}  2^e   							&\hspace{1cm} e > 0  \vspace{.3cm} \\
\dfrac{2^l + m}{2^l}  \dfrac{1}{2^{|e|}} 				&\hspace{1cm} e < 0  \vspace{.3cm} \\
\dfrac{  0 + m}{2^l}  \dfrac{1}{2^{2^{k-1}-2}}      	&\hspace{1cm} e = 0  
\end{cases}
\end{align*}
where the latter case describes `denormalized' numbers.
In each case, the uncanceled rational value has some power of $2$ as the denominator.
Since powers of two are the sole divisors of a $2^i$, the denominator of the canceled rational has to be a power of two, too.
Hence, rational values representable by floating point numbers are a subset of the following set $P$ and fixed-point binary numbers are a subset of $P_i$:
\begin{align*}
\operatorname{img}\operatorname{val} \subseteq \left\{ \frac{z}{2^i} : i \in \N, z \in \Z, z \text{~odd} \right\}=& P \\
\left\{ \frac{z}{2^i} : z \in \Z \right\} = P_i \subseteq &P \quad.
\end{align*}
	
\subsection{Floating Point Numbers are Insufficient}
Fix-point and floating-point arithmetics of modern CPUs work within a subset of rational numbers, in which the denominator is some power of two and the result of each arithmetic operation is `rounded'.
\begin{theorem}\label{thmNoFloats}
	There are only $4$ floating point numbers on $\S^1$ and $6$ on $\S^2$.
\end{theorem}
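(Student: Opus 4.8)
The plan is to exploit the characterization $\operatorname{img}\operatorname{val} \subseteq P = \{z/2^i : i \in \N,\ z \in \Z\ \text{odd}\}$ together with the bijection between rational points on $\S^{d-1}$ and rational points in $\R^{d-1}$ furnished by the stereographic projection $\tau$ and its inverse $\sigma$. For the circle $\S^1$, a point with both coordinates in $P$ has the form $(a/2^i, b/2^j)$; clearing denominators and using the sphere equation yields a Pythagorean-type identity, and the heart of the argument is a $2$-adic valuation (parity) count showing the only solutions are the four axis points $(\pm 1,0),(0,\pm 1)$. For $\S^2$ one does the same with three coordinates: $x_1^2+x_2^2+x_3^2=1$ with each $x_k \in P$, clear to a common power of two, and run the valuation argument again to force two of the coordinates to vanish, leaving the six points $(\pm1,0,0),(0,\pm1,0),(0,0,\pm1)$.

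Concretely, for $\S^1$: write $x_1 = a/2^m$, $x_2 = b/2^n$ in lowest terms (so $a,b$ odd unless the coordinate is $0$ or $\pm1$; handle those trivial cases first). If both coordinates are nontrivial, multiply through by $2^{2\max(m,n)}$ to get $a^2 2^{2(M-m)} + b^2 2^{2(M-n)} = 2^{2M}$ with $M=\max(m,n)$. Compare $2$-adic valuations of the three terms: each exponent $2(M-m),2(M-n),2M$ is even, and since $a,b$ are odd the valuations of the two left summands are exactly $2(M-m)$ and $2(M-n)$. If $m\neq n$ these are distinct, so the valuation of the sum equals the smaller one, which must then equal $2M$; but the smaller of $2(M-m),2(M-n)$ is strictly less than $2M$ whenever the corresponding coordinate is nontrivial — contradiction. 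If $m=n$ then $a^2+b^2 = 2^{2M}$ with $a,b$ odd, so $a^2+b^2 \equiv 2 \pmod 8$ while $2^{2M}\equiv 0 \pmod 8$ for $M\geq 2$, and $M\le 1$ gives only the trivial solutions by inspection. Either way no nontrivial floating-point point exists.

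For $\S^2$ the same scheme applies, now with a subtlety: $a^2+b^2+c^2 = 2^{2M}$ after scaling, and one must track which of the three scaled terms have minimal valuation. The case analysis splits on how many of $m_1,m_2,m_3$ attain the maximum. If exactly one coordinate has the largest denominator exponent, its scaled square has strictly smaller valuation than the other two and than $2^{2M}$, contradiction. If two or three attain it, reduce mod $4$ or mod $8$: a sum of $k\in\{1,2,3\}$ odd squares is $\equiv k \pmod 8$, never $\equiv 0 \pmod 8$, ruling out $M\geq 2$, and the small cases $M\le 1$ are finished by direct enumeration. The remaining possibility — some coordinates trivial — reduces to the $\S^1$ statement or to $\{0,\pm1\}$ directly.

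The main obstacle is organizing the case distinction cleanly: the $2$-adic valuation argument is immediate when the denominator exponents are all distinct, but the ties ($m=n$ in dimension $2$, or two/three exponents equal in dimension $3$) require the separate modular arithmetic observation that a sum of odd squares is never divisible by a large power of $2$. I would state that observation as a small lemma — for odd $a$, $a^2 \equiv 1 \pmod 8$ — and then the whole proof becomes a short, uniform valuation-plus-parity check in both dimensions, with the handful of trivial coordinates enumerated explicitly to pin down the exact counts $4$ and $6$.
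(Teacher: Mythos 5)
Your proof uses the same core idea as the paper's: clear denominators to a common power of $2$ in the sphere equation and exploit that odd squares are $\equiv 1 \pmod 4$ (you invoke $\pmod 8$, which is a little more than needed). The paper's presentation is more uniform — after multiplying through by the largest denominator $4^{e_1}$ it reduces everything modulo $4$ at once, with a characteristic function $\chi_{e_1}(e_j)$ tallying the tied exponents, and it disposes of vanishing coordinates by taking the dimension $d$ minimal — whereas your split into a $2$-adic valuation argument for distinct exponents plus a mod-$8$ argument for ties, together with a separate enumeration of trivial coordinates, reaches the same contradiction by a slightly longer route.
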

\begin{proof}
	We show $\S^{d-1} \cap P^d \nsubseteq \{-1,0,1\}^d$ implies ${d \geq 4}$.
	Suppose there is a non-trivial $p \in \S^{d-1}\cap P^d$ with $d$ minimal.
	Let $x_i/2^{e_i}$ denote the canceled fraction of its $i$-th coordinate.
	We have that all $x_i \neq 0$, $x_i$ are odd numbers and all $e_i >0$ (since $p$ is not one of the $2d$ poles and $d$ is minimal).
	
	W.l.o.g. $e_1 \leq e_2 \leq \ldots \leq e_d$.
	We rewrite the sphere equation $1 = \sum_{j=1}^{d} (x_i/2^{e_i})^2$ to
	\begin{align*}
	x_1^2 &= 4^{e_1} - \sum_{j=2}^{d}4^{e_1 - e_j} x_j^2 \quad.
	\end{align*}
	For an odd integer $y$, we have $y^2=(2k+1)^2= 4(k^2+k)+1$, leading to the congruence
	\begin{align*}
	1 \equiv 0 - \sum_{j=2}^{d}\chi_{e_1}(e_j)\quad\mod 4 ~.
	\end{align*}
	Where the characteristic function $\chi_{e_1}(e_j)$ is $1$ for $e_1 = e_j$ and $0$ otherwise.
	For $d\in\{2,3\}$ the right hand side can only have values of $0,-1$ or $-2$, a contradiction.
\end{proof}
Note that theorem \ref{thmNoFloats} translates to spheres with other radii through scaling.
Suppose a sphere in $\R^3$ of radius $2^j$ has a non-trivial solution $y \in P^3$, then $y/2^j \in P^3$ and would be on $\S^2$, too.

\subsection{Snapping to Rational Points}
We now describe how to compute a good rational approximation \emph{exactly on} the unit sphere $\S^{d-1}$.
The input point $x \in \R^d$ can be given by any finite means that allows to compute rational approximations of arbitrary target precision -- E.g. rational approximations of Cartesians for spherical coordinates.
For the input $x$, we denote its closest point on $\S^{d-1}$ with $x/\lVert x \rVert_2$.
The stereographic projection $\tau$ and its inverse mapping $\sigma$ provide $\sigma\left(\tau\left( x/\lVert x\rVert_2 \right) \right)= x/\lVert x\rVert_2$, since the argument is on $\S^{d-1}$.
Instead of determining the value of $\tau$ exactly, we calculate an approximation $y \in \Q^d$ and finally evaluate $\sigma(y)$ under exact, rational arithmetics.
Hence, the result $\sigma(y)$ is exactly on $\S^{d-1}$.
\begin{wrapfigure}{r}{.35\columnwidth}
	\centering
	\begin{tikzpicture}[scale=.13]
	\draw (0,0) circle (10);
	\fill (10,-10) circle (.25) node [right] 			{\small $x$};
	\fill (7,-7) circle (.25) node [left] 				{\small $x / \lVert x \rVert_2$};
	\fill (8,-6) circle (.25) node [right] 				{\small $\sigma(y)$};
	\fill (0,+10) circle (.25) node [above] 			{\small $(0,1)$};
	\fill (4.14213562,0) circle (.25) node [above left] {\small $\tau\left(x/\lVert x \rVert_2\right)~$};
	\fill (5,0) circle (.25) node [above right] 		{\small $y$};
	\draw[-] (-12,0) -- node [above] {} (12,0);	 
	\draw[-,dashed,gray] (0,0) -- node [above] {} (10,-10); 
	\draw[-] (0,10) -- node [above] {} (7.07106781,-7.07106781);
	\draw[-] (0,10) -- node [above] {} (8,-6);
	\end{tikzpicture}
\end{wrapfigure}
The stereographic projection does not preserve distances, leaving it open to bound the approximation error and the size of the resulting denominators.
We use the \emph{rotation symmetry} of the sphere to limit the stretching of $\sigma$ (c.f. Lemma \ref{lemGeoStretching}):
For a non-zero point $x \in \R^d$ we can assume that $i=d$ maximizes $|x_i|$ and $x_d < 0$, otherwise we change the standard orthonormal basis by swapping dimension $i$ and $d$ and using a negative sign for dimension $d$.
Note that such rotations do not change the actual coordinate values.
To keep the \emph{size of denominators} in $\sigma(y)$ small, we use fixed-point arithmetics to determine $y \in \Q^{d-1}$ (c.f. Lemma \ref{lemDenomSize}).

\begin{algorithm}[H]
	\caption{PointToSphere} \label{algoSnapping}
	\begin{enumerate}
		\item[In:] $x \in \R^d,\quad \varepsilon \in \left(0,\tfrac18\right]$
		\item Assert $x_d = \min_i-|x_i|$
		\item Choose $y\in \Q^{d-1}$ with
			$| y_i - \tau_i\left( x / \lVert x\rVert_2 \right) | \leq \frac{\varepsilon}{2\sqrt{d-1}}$
			\label{algoStatementChoose}
		\item Return $\sigma(y) \in \Q^d$.
			\label{algoStatementReturn}
	\end{enumerate}
\end{algorithm}

See Algorithm \ref{algoSnapping} for a precise description.
Note that the rational point $y$ in statement $\ref{algoStatementChoose}$ solely needs to meet the target approximation in the individual coordinates for
$$
\tau_i (x / \lVert x \rVert_2 ) = \frac{x_i}{\lVert x \rVert_2 - x_ d} \quad.
$$
Generally, this can be determined with methods of `approximate expression evaluation' to our target precision\cite{DBLP:journals/jlp/LiPY05}.
If $x$ is an approximation to a geo-referenced point, this denominator is well conditioned for calculations with multi-precision floating-point arithmetics\cite{DBLP:conf/compgeom/BurnikelFS98, Fousse:2007:MMB:1236463.1236468}.
Using exact rational arithmetics for statement $\ref{algoStatementReturn}$, we obtain a rational Cartesian coordinates \emph{on} the unit sphere.

\begin{observation} \label{obsTauInCircle}
	For $d > 1$ and $ x \in \S^{d-1}$ with $x_d = \min_i -|x_i|$, we have
	\[
\lVert	\tau(x) \rVert_2 \leq \sqrt{ \frac{\sqrt{d}-1}{\sqrt{d}+1} } < 1 \quad.
	\]
\end{observation}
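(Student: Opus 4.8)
We want to bound $\lVert \tau(x) \rVert_2$ for $x \in \S^{d-1}$ with $x_d = \min_i -|x_i|$, i.e. $x_d$ is the most negative coordinate. The key formula is that for $x$ on the sphere,
$$
\lVert \tau(x) \rVert_2^2 = \sum_{i=1}^{d-1} \frac{x_i^2}{(1-x_d)^2} = \frac{1 - x_d^2}{(1-x_d)^2} = \frac{1+x_d}{1-x_d},
$$
using $\sum_{i=1}^{d-1} x_i^2 = 1 - x_d^2$. So the quantity is a monotone increasing function of $x_d$, and I just need the best (largest) possible value of $x_d$ under the constraint that $x_d$ is the minimum of $-|x_1|,\dots,-|x_d|$, equivalently $|x_d| \geq |x_i|$ for all $i$ and $x_d < 0$ (the case $x_d=0$ forces $x=0$, impossible, or rather forces all coordinates zero — actually if $x_d=0$ then all $x_i=0$, contradicting $x\in\S^{d-1}$, so $x_d<0$ strictly when $d>1$... wait, not quite: $x_d=\min_i -|x_i|$ could be $0$ only if all $x_i=0$; so indeed $x_d<0$).

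First I would record the identity above. Then, since $|x_d|$ is the largest of the $|x_i|$, we have $1 = \sum_i x_i^2 \leq d\, x_d^2$, hence $x_d^2 \geq 1/d$, so $x_d \leq -1/\sqrt d$ (recall $x_d<0$). Plugging the maximal value $x_d = -1/\sqrt d$ into the increasing function $\frac{1+t}{1-t}$ gives
$$
\lVert \tau(x) \rVert_2^2 = \frac{1+x_d}{1-x_d} \leq \frac{1 - 1/\sqrt d}{1 + 1/\sqrt d} = \frac{\sqrt d - 1}{\sqrt d + 1},
$$
and taking square roots yields the claim; strict inequality $<1$ is immediate since $\sqrt d - 1 < \sqrt d + 1$.

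**Main obstacle.** There is essentially no hard step here — the only things to be careful about are: (i) justifying $x_d < 0$ so that the monotonicity argument points the right way (handled by noting $x \in \S^{d-1}$ and $d>1$ forbid all coordinates being zero); and (ii) the direction of the inequality when we pass from "$x_d \le -1/\sqrt d$" to a bound on $\frac{1+x_d}{1-x_d}$ — since this map is increasing on $(-1,1)$ and $-1/\sqrt d$ is an upper bound for $x_d$, it is the worst case and the inequality direction is correct. I would present the computation of $\lVert\tau(x)\rVert_2^2$ as a short display, then the two-line chain $x_d^2\ge 1/d \Rightarrow x_d \le -1/\sqrt d \Rightarrow$ the final bound, and that completes it.
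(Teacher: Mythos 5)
Your proof is correct and takes essentially the same route as the paper: compute $\lVert\tau(x)\rVert_2^2=\tfrac{1+x_d}{1-x_d}$, bound $x_d^2\ge 1/d$ from the minimality of $x_d$, and use monotonicity of $t\mapsto\tfrac{1+t}{1-t}$ to conclude. The only difference is that you spell out the monotonicity and sign arguments that the paper leaves implicit.
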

\begin{proof}
Using $x_d = \min_i -|x_i|$ and $\sum_{i}x_i^2= 1$, we have the bounds $1/d \leq x_d^2 \leq 1$ and
\begin{align*}
\lVert \tau ( x) \rVert_2^2
&= \frac{\sum_{i=1}^{d-1} x_i^2}{(1-x_d)^2}
 = \frac{1 - x_d^2}{(1-x_d)^2}
 = \frac{1+x_d}{1-x_d}
\leq \frac{1-1/\sqrt{d}}{1+1/\sqrt{d}} ~.
\end{align*}
Where the latter term is in $(0,1)$ for any $d$.\end{proof}
Hence the ($d-1$)-ball $ 
\B^{d-1}_1 = \{ x \in \R^d ~:~ \lVert x \rVert_2 \leq 1\}
$ contains $\tau(x)$.

\subsection{Approximation Quality}
See \cite{self} for an earlier version of this paper with a weaker, but elementary, analysis.

We consider the problem in  the $2$D hyperplane $H_{pyy'}$, defined by two points $y=\sigma(x)$, $y'=\sigma(x')$ on $\S^{d-1}$ and the projection pole $p\in \R^d$.
Given the rotation step in Algorithm \ref{algoSnapping}, the projection plane ${H_0=\{ x \in \R^d: x_d = 0\}}$ separates $p$ and $y,y'$ in $\R^d$ and in $H_{pyy'}$.
Since each ${q \in H_0 \cap S^{d-1}}$ has $\lVert q - p\rVert_2=\sqrt{2}$ (consider $\overline{pq}$ in $H_{0pq}$), 
the circumcircle $C$ of $p,y$ and $y'$ contains exactly two of these points.
Hence, the line of $H_{pyy'}\cap H_0$ is orthogonal to the circumcircle's diameter through $p$.
Moreover, the circles diameter is in $[\sqrt{2},2]$.
We denote with $x$ the point that is closer to $p$ in $H_{pyy'}$, meaning ${\lVert x\rVert_2 \leq \lVert x'\rVert_2}$.
Note that $x'$ and $x$ can be on the same or opposite circumcircle halves.

\begin{center}
	\begin{tikzpicture}[scale=.3]
	\coordinate (P) at (90:10);
	\coordinate (V) at (210:10);
	\coordinate (U) at (230:10);
	\coordinate (M) at (0,0);
	\coordinate (HP) at  (190:10);
	\coordinate (HP2) at (-10:10);
	\draw[gray,dashed] (0,0) circle (10);

	\fill (P) circle (.25) node [above] {$p$};
	\fill (V) circle (.25) node [below left] {$y'$};
	\fill (U) circle (.25) node [below left]  {$y$};
	\fill (M) circle (.25) node [right] {$m$};
	
	\draw [name path=PV] (P) -- node [above] {} (V);
	\draw [name path=PU] (P) -- node [above] {} (U);
	\draw [name path=XY] (V) -- node [above] {} (U);
	\draw [name path=VM] (V) -- (0,0) -- (P);
	\draw [name path=HPP] (HP) -- (HP2)  node [below]  {$H_0$} ;
	
	\path [name intersections={of=HPP and PU,by=HPP-PY}];
	\draw (V) -- (U);
	\path [name path=XYSHIFT] (HPP-PY) -- +($(V)-(U)$);
	\path [name intersections={of=PV and XYSHIFT,by=PX-XYSHIFT}];
	\draw (PX-XYSHIFT) -- (HPP-PY);
	\path [name intersections={of=PV and HPP,by=HPP-PX}];
	\draw (HPP-PX) -- (HPP-PY);
	
	\fill (HPP-PX) circle (.25) 				node [ below]{$x'$};
	\fill (HPP-PY) circle (.25) 				node [below, xshift=6]{$x$};
	\fill ($(P)!(HPP-PY)!(V)$) circle (.25) 	node [above left]		{$b$};
	\fill (PX-XYSHIFT) circle (.25) 			node [above, xshift=2]		{$a$};
	
	\draw[name path=B] ($(P)!(HPP-PY)!(V)$) -- (HPP-PY);
	\draw[dotted, gray] (P) -- (HP)  node [midway,left ] {\tiny $=\sqrt{2}$};
	\draw[dotted, gray] (P) -- (HP2) node [midway,right] {\tiny $=\sqrt{2}$};
	\draw[name path=MH0, dotted, gray] ($(HP)!(M)!(HP2)$) -- (M);
	
	\draw (0,2) arc (90:210:2);
	\node[above left] (M) {\tiny $\gamma$};
	\end{tikzpicture}
\end{center}

In this section we denote with ${B=\overline{b x} }$ the perpendicular from $x$ on $\overline{py'}$,$E=\overline{xx'}$, ${L=\overline{yy'}}$ and 
${L_x=\overline{xa} }$ its triangle scaled version meeting $x$.
Note that $B$ and $L_x$ are above $H_0$, hence above $E$.

\begin{lemma} \label{lemInscrAngle}
	For $x,x' \in \B^{d-1}_1$ with ${\lVert x\rVert_2 \leq \lVert x'\rVert_2}$, we have 
	$$ 
	\frac{\lVert x \rVert_2}{\lVert p-\sigma(x) \rVert_2} \lVert \sigma(x) - \sigma(x') \rVert_2 \leq \lVert x - x' \rVert_2~.
	$$
\end{lemma}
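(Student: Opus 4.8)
Everything lives in the plane $H_{pyy'}$; by construction of the stereographic projection the order along the two rays from $p$ is $p,x,\sigma(x)$ and $p,x',\sigma(x')$, with $H_0$ separating $p$ from both $\sigma(x)$ and $\sigma(x')$. I would prove the two‑step chain
\[
\frac{\lVert x\rVert_2}{\lVert p-\sigma(x)\rVert_2}\,\lVert\sigma(x)-\sigma(x')\rVert_2 \;\le\; L_x \;\le\; E .
\]
The first inequality is routine: $L_x=\overline{xa}$ is the image of $L=\overline{\sigma(x)\,\sigma(x')}$ under the homothety centred at $p$ that sends $\sigma(x)$ to $x$, so $L_x=\lambda L$ with $\lambda=\lVert p-x\rVert_2/\lVert p-\sigma(x)\rVert_2$, and it is enough to observe $\lVert x\rVert_2\le\lVert p-x\rVert_2$, which holds because $x$ sits in the hyperplane $\{x_d=0\}$ at distance $\lVert x\rVert_2$ from the origin while $p$ sits at distance $1$ on the orthogonal axis, so that $\lVert p-x\rVert_2=\sqrt{\lVert x\rVert_2^2+1}$.

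The second inequality $L_x\le E$ is the substance of the lemma. I would drop the perpendicular $B=\overline{bx}$ from $x$ onto the line $\ell$ through $p$ and $\sigma(x')$; since $a$ and $x'$ both lie on $\ell$ and $b$ is the perpendicular foot, Pythagoras in the right triangles $\triangle xba$ and $\triangle xbx'$ reduces the goal to $\lVert b-a\rVert_2\le\lVert b-x'\rVert_2$. To obtain this I would place $a$ and $b$ on $\ell$ inside $\overline{p\,x'}$: the point $a$ lies between $p$ and $x'$ because the homothety ratio $\lVert p-a\rVert_2/\lVert p-\sigma(x')\rVert_2=\lambda=\lVert p-x\rVert_2/\lVert p-\sigma(x)\rVert_2$ is monotone in $\lVert x\rVert_2$ and hence, by the hypothesis $\lVert x\rVert_2\le\lVert x'\rVert_2$, at most $\lVert p-x'\rVert_2/\lVert p-\sigma(x')\rVert_2$; and $b$, the homothety image of the altitude foot of $\sigma(x)$ in $\triangle p\,\sigma(x)\,\sigma(x')$, likewise lies in $\overline{p\,x'}$ by a short computation of $\lVert p-b\rVert_2$. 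With $a,b\in\overline{p\,x'}$, the inequality $\lVert b-a\rVert_2\le\lVert b-x'\rVert_2$ is equivalent to $2\lVert p-b\rVert_2\le\lVert p-a\rVert_2+\lVert p-x'\rVert_2$, and after expanding the three lengths this reads $2\,(x\cdot x')\le\lVert x\rVert_2^2+\lVert x'\rVert_2^2$, which is Cauchy--Schwarz followed by the inequality of arithmetic and geometric means.

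The step I expect to be the real obstacle is precisely this last bookkeeping: pinning down the order of $p,a,b,x',\sigma(x')$ on $\ell$ without missing a configuration — the naive picture may fail, and the degenerate sub‑cases ($\sigma(x)$ or $\sigma(x')$ on the equator, $x$ and $x'$ collinear with the origin) together with the sign of the perpendicular foot need care. If the synthetic route gets unwieldy, there is a safe fallback that is purely computational: from the definition of $\sigma$ one gets $\lVert p-\sigma(x)\rVert_2=2/\sqrt{1+\lVert x\rVert_2^2}$ and $\lVert\sigma(x)-\sigma(x')\rVert_2=\frac{2\lVert x-x'\rVert_2}{\sqrt{1+\lVert x\rVert_2^2}\,\sqrt{1+\lVert x'\rVert_2^2}}$, and substituting these reduces the claimed inequality to $\lVert x\rVert_2^2\le 1+\lVert x'\rVert_2^2$, which is immediate from $x\in\B^{d-1}_1$.
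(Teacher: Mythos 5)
Your proposal is correct, and both of your routes are genuinely different from the paper's. The paper also reduces to showing $L_x \le E$, but it does so synthetically by comparing the angles $\alpha=\measuredangle axb$ and $\beta=\measuredangle bxx'$ via the circumcircle $C$ of $p,\sigma(x),\sigma(x')$ and the inscribed angle theorem — the statement $\alpha\le\beta$ then follows from the inner angle sum of $\triangle p\sigma(x)\sigma(x')$. Your first route replaces that circle argument by a Pythagoras-plus-ordering computation on the line $\ell$; your bookkeeping (that $a$ and $b$ both lie in $\overline{p\,x'}$, so the claim reduces to $2\,x\cdot x'\le\lVert x\rVert_2^2+\lVert x'\rVert_2^2$) checks out, and the hypothesis $\lVert x\rVert_2\le\lVert x'\rVert_2$ enters exactly where you say, to pin $a$ between $p$ and $x'$. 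Your fallback route is, in my view, the cleanest: the conformal factor of the stereographic projection gives
\[
\lVert p-\sigma(x)\rVert_2=\frac{2}{\sqrt{1+\lVert x\rVert_2^2}},\qquad
\lVert\sigma(x)-\sigma(x')\rVert_2=\frac{2\,\lVert x-x'\rVert_2}{\sqrt{1+\lVert x\rVert_2^2}\,\sqrt{1+\lVert x'\rVert_2^2}},
\]
and the claimed inequality collapses to $\lVert x\rVert_2^2\le 1+\lVert x'\rVert_2^2$, which holds using only $x\in\B^{d-1}_1$. This bypasses the geometry entirely, makes the (unused) slack in the hypothesis $\lVert x\rVert_2\le\lVert x'\rVert_2$ visible, and would shorten the paper's argument; what the paper's angle argument buys in return is the genuinely stronger conclusion $L_x\le E$, i.e.\ with coefficient $\lVert p-x\rVert_2/\lVert p-\sigma(x)\rVert_2$ rather than $\lVert x\rVert_2/\lVert p-\sigma(x)\rVert_2$, although both are enough for the downstream Lemma~\ref{lemGeoStretching}.
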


\begin{proof}
	We show $L_x \leq E$ by proofing $\alpha \leq \beta$ for the two angles
	\begin{align*}
	\beta  &:= \measuredangle b x x'\\
	\alpha &:= \measuredangle a x b~.
	\end{align*}

	The inner angle sum of $\triangle xab$ with a supplementary angle argument and triangle scaling provide $\measuredangle py'y = 90^\circ + \alpha$.
	Let $m$ denote the center of $C$.
	Since $\overline{pm}$ is orthogonal on $H_0$ and $\measuredangle bx'x = 90^\circ - \beta$, we have $\measuredangle mp x' = \beta$.
	In the isosceles triangle $\triangle py'm$, the central angle $\gamma = 180^\circ - 2\beta$.
	Fixing arc $\overline{py'}$ on $C$ for the inscribed angle theorem provides $\measuredangle y'yp=\gamma/2$.
	
	Now, suppose $\alpha > \beta$.
	The inner angle sum of $\triangle pyy'$ states
	\begin{align*}
	0 \leq \measuredangle y'py
	&= 180^\circ - \measuredangle y'yp - \measuredangle py'y\\
	&= 180^\circ - \measuredangle y'yp - (90^\circ + \alpha)\\
	&= 180^\circ - \gamma/2  - (90^\circ + \alpha)\\
	& = -\alpha + \beta ~
	\end{align*}
	a contradiction.
\end{proof}

\begin{lemma} \label{lemGeoStretching}
	For ${x, x' \in \B^{d-1}_1 }$ , we have
	\begin{align*} 
	\Big\lVert\sigma( x) - \sigma(x') \Big\rVert_2 \leq 2 ~\lVert x - x' \rVert_2.
	\end{align*}
\end{lemma}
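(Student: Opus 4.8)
The plan is to reduce Lemma~\ref{lemGeoStretching} to Lemma~\ref{lemInscrAngle} by controlling the factor $\lVert x\rVert_2 / \lVert p-\sigma(x)\rVert_2$ from below by $1/2$ whenever $x$ is the point of $\{x,x'\}$ closer to the pole $p$ in the plane $H_{pyy'}$. Concretely, for $x,x' \in \B^{d-1}_1$ order them so that $\lVert x\rVert_2 \leq \lVert x'\rVert_2$. Lemma~\ref{lemInscrAngle} then gives
\begin{align*}
\lVert \sigma(x)-\sigma(x')\rVert_2 \;\leq\; \frac{\lVert p-\sigma(x)\rVert_2}{\lVert x\rVert_2}\,\lVert x-x'\rVert_2,
\end{align*}
so it suffices to show $\lVert p-\sigma(x)\rVert_2 \leq 2\lVert x\rVert_2$ for every $x \in \B^{d-1}_1$.

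First I would write both quantities explicitly via the formula for $\sigma$. With $S^2 = \sum_{j=1}^{d-1} x_j^2 = \lVert x\rVert_2^2 \leq 1$, the last coordinate of $\sigma(x)$ is $(-1+S^2)/(1+S^2)$, so $p - \sigma(x)$ has last coordinate $1 - \frac{-1+S^2}{1+S^2} = \frac{2}{1+S^2}$ and $i$-th coordinate $-\frac{2x_i}{1+S^2}$ for $i<d$. Hence
\begin{align*}
\lVert p-\sigma(x)\rVert_2^2 = \frac{4 + \sum_{i=1}^{d-1}(2x_i)^2}{(1+S^2)^2} = \frac{4 + 4S^2}{(1+S^2)^2} = \frac{4}{1+S^2}.
\end{align*}
So the required inequality becomes $\frac{4}{1+S^2} \leq 4 S^2$, i.e. $1 \leq S^2(1+S^2)$. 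This fails for small $S^2$, which signals that the naive bound is wrong and the geometry of the rotation step must be used: the ordering "$x$ closer to $p$" is taken inside the plane $H_{pyy'}$, and in that plane the chord from $p$ through $\sigma(x)$ meets the circumcircle $C$, whose diameter lies in $[\sqrt2,2]$, so $\lVert p - \sigma(x)\rVert_2 \leq 2$ automatically. Thus the real task is the lower bound $\lVert x\rVert_2 \geq 1$... which is also false in general, so the argument cannot be purely pointwise in $x$ alone.

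The correct route, and the one I would pursue, keeps $x$ and $x'$ coupled. Since $\sigma(x),\sigma(x') \in \S^{d-1}$ and the projection plane $H_0$ separates $p$ from both (by the rotation asserted in Algorithm~\ref{algoSnapping}), the point $x$ closer to $p$ on the circumcircle $C$ satisfies $\lVert p - \sigma(x)\rVert_2 \leq \lVert p - q\rVert_2 = \sqrt2$ for the point $q$ where $C$ crosses $H_0$ on $x$'s side, because on the arc of $C$ from $p$ down to that crossing point the distance to $p$ is monotone. Meanwhile $\sigma$ maps $\B^{d-1}_1$ onto the lower hemisphere $\{z \in \S^{d-1} : z_d \leq 0\}$, and $\tau(z) = x$ has $\lVert x\rVert_2^2 = (1+z_d)/(1-z_d)$; writing $z = \sigma(x)$ and using that $x$ is the one of the two whose image is higher (closer to $p$, hence larger $z_d$) pins $\lVert x\rVert_2 \geq$ the value forced by the chord geometry. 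Combining $\lVert p-\sigma(x)\rVert_2 \leq \sqrt2$ with the matching lower bound on $\lVert x\rVert_2$ that the same circle forces yields $\lVert p-\sigma(x)\rVert_2 / \lVert x\rVert_2 \leq 2$, and Lemma~\ref{lemInscrAngle} finishes the proof.

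The main obstacle is exactly this coupling: one cannot bound $\lVert p-\sigma(x)\rVert_2/\lVert x\rVert_2$ using $x$ in isolation (the function $\sqrt{4/(1+S^2)}\big/\sqrt{S^2}$ blows up as $S^2 \to 0$), so the proof must exploit that $x$ is the \emph{closer} of two points lying on a common circumcircle whose diameter is bounded in $[\sqrt2,2]$ and which crosses $H_0$. I expect the delicate part to be making the monotonicity-along-the-arc statement precise — that walking along $C$ from $p$ toward the $H_0$-crossing on $x$'s side, the Euclidean distance to $p$ increases — and extracting from the same configuration the companion lower bound on $\lVert x\rVert_2$; once both are in hand, the algebra from the $\sigma$ and $\tau$ formulas and the appeal to Lemma~\ref{lemInscrAngle} are routine.
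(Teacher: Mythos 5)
Your computation that $\lVert p-\sigma(x)\rVert_2^2 = 4/(1+\lVert x\rVert_2^2)$ is correct, and you are right that the literal ratio $\lVert p-\sigma(x)\rVert_2/\lVert x\rVert_2$ is unbounded as $\lVert x\rVert_2 \to 0$; this correctly identifies a notational infelicity in the statement of Lemma~\ref{lemInscrAngle}. But the fix you then propose is wrong, and it misses the simple resolution the paper actually uses.

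The claim that the point $x$ closer to $p$ satisfies $\lVert p-\sigma(x)\rVert_2 \leq \sqrt{2}$ is false. For $x \in \B^{d-1}_1$ the image $\sigma(x)$ lies on the \emph{closed lower hemisphere} $\{z \in \S^{d-1}: z_d \leq 0\}$, i.e.\ on the opposite side of $H_0$ from $p$. On the circumcircle $C$ this places $\sigma(x)$ on the \emph{far} arc, beyond the two $H_0$-crossings, so $\lVert p-\sigma(x)\rVert_2 \in [\sqrt{2},2]$ rather than $\leq\sqrt2$; taking $x=x'=0$ gives distance exactly $2$. Your monotone-along-the-arc argument runs the wrong way. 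The companion "lower bound on $\lVert x\rVert_2$ forced by the chord geometry" that you hope for does not exist either: $\lVert x\rVert_2$ can be arbitrarily small, and no coupling with $x'$ changes that.

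The quantity that is actually bounded below is not $\lVert x\rVert_2$ but the Euclidean distance in $\R^d$ from the pole $p=(0,\dots,0,1)$ to the point $x$ embedded in the hyperplane $H_0$, namely $\overline{px}=\sqrt{1+\lVert x\rVert_2^2}\geq 1$. The proof of Lemma~\ref{lemInscrAngle} establishes $L_x \leq E$, where $L_x = L\cdot\overline{px}/\overline{py}$ by similar triangles from $p$; the printed factor $\lVert x\rVert_2$ in that lemma's statement is a weakening (since $\lVert x\rVert_2\leq\overline{px}$), and the paper's proof of Lemma~\ref{lemGeoStretching} cites the intermediate inequality $L_x\leq E$ directly. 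With $\overline{px}\geq 1$ and $\overline{py}\leq 2$ (diameter of $\S^{d-1}$), the scaling $L = L_x\cdot\overline{py}/\overline{px}\leq 2L_x\leq 2E$ closes the proof in one line. No circumcircle-diameter bound, no monotonicity along $C$, and no coupling of $x$ with $x'$ beyond what Lemma~\ref{lemInscrAngle} already provides is needed.
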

\begin{proof}
	Using Lemma \ref{lemInscrAngle}, we have $L_x\leq E$ and the statement follows via triangle scaling:
	$$
	L = L_x ~\overline{py} ~/~ \overline{px} \leq 2 L_x\leq  2 E ~,
	$$
	since $\overline{px} \geq 1$ and $\overline{py} \leq 2$.
\end{proof}
This statement is tight,
considering the two points $x = 0$ and
$x'=\left( \varepsilon/\sqrt{d-1}, \ldots, \varepsilon/\sqrt{d-1}\right)$.
We have $\lVert x - x'\rVert_2=\varepsilon$ and 
$
\left\lVert \sigma(x) - \sigma(x')\right\rVert_2
= 2\frac{1}{\sqrt{1+\varepsilon^2 }} \varepsilon
$.

\begin{theorem}\label{thmQuality}
Algorithm \ref{algoSnapping} calculates an $\varepsilon$-approximation exactly on the unit sphere.
\end{theorem}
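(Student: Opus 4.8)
The plan is to split the claim into its two components — that the returned point lies \emph{exactly} on $\S^{d-1}$, and that it is within distance $\varepsilon$ of the closest point $x/\lVert x\rVert_2$ — and to dispatch them in that order. The first is essentially free: Section~2 establishes $\operatorname{img}\sigma\subseteq\S^{d-1}$ and that $\sigma$ sends rationals to rationals, so for $y\in\Q^{d-1}$ we get $\sigma(y)\in\Q^d\cap\S^{d-1}$; since statement~\ref{algoStatementReturn} evaluates $\sigma(y)$ under exact rational arithmetic, the computed output equals $\sigma(y)$ with no rounding, hence it is genuinely a point of the sphere. So all the work is in the error estimate.

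Write $x^{*}:=x/\lVert x\rVert_2$ for the closest point on $\S^{d-1}$. First I would observe that the rotation forced by the assertion step merely permutes coordinates and changes one sign, and that multiplying by the positive scalar $1/\lVert x\rVert_2$ preserves the property $x^{*}_d=\min_i -|x^{*}_i|$; hence Observation~\ref{obsTauInCircle} applies to $x^{*}$, giving $\tau(x^{*})\in\B^{d-1}_1$ with $\lVert\tau(x^{*})\rVert_2<1$, and $\sigma(\tau(x^{*}))=x^{*}$ because $\sigma$ and $\tau$ are mutually inverse on $\S^{d-1}\setminus\{p\}$. Next I would check that the approximant $y$ of statement~\ref{algoStatementChoose} can be taken inside $\B^{d-1}_1$ as well: truncating each coordinate towards the origin yields a valid $y$ with $|y_i|\le|\tau_i(x^{*})|$ and still $|y_i-\tau_i(x^{*})|\le\varepsilon/(2\sqrt{d-1})$, so $\lVert y\rVert_2\le\lVert\tau(x^{*})\rVert_2<1$. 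With $\tau(x^{*}),y\in\B^{d-1}_1$, Lemma~\ref{lemGeoStretching} gives
\[
\bigl\lVert x^{*}-\sigma(y)\bigr\rVert_2
=\bigl\lVert\sigma(\tau(x^{*}))-\sigma(y)\bigr\rVert_2
\le 2\,\lVert\tau(x^{*})-y\rVert_2,
\]
and the coordinatewise bound from statement~\ref{algoStatementChoose} gives $\lVert\tau(x^{*})-y\rVert_2\le\sqrt{d-1}\cdot\varepsilon/(2\sqrt{d-1})=\varepsilon/2$. Chaining the two inequalities yields $\lVert x^{*}-\sigma(y)\rVert_2\le\varepsilon$, which is exactly the $\varepsilon$-approximation claim.

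The one spot that needs genuine care is keeping $y$ inside $\B^{d-1}_1$, since that is precisely the hypothesis of Lemma~\ref{lemGeoStretching} (and the example after that lemma shows the factor $2$ cannot be improved, so there is no slack to spare). A naive triangle inequality only gives $\lVert y\rVert_2\le\lVert\tau(x^{*})\rVert_2+\varepsilon/2$, and because Observation~\ref{obsTauInCircle}'s bound $\sqrt{(\sqrt d-1)/(\sqrt d+1)}$ approaches $1$ as $d$ grows, that does not stay below $1$ for large $d$; hence the freedom in the word ``choose'' in statement~\ref{algoStatementChoose} must be used to land in the ball (e.g.\ by rounding each coordinate towards $0$), not exercised arbitrarily. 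Everything else — propagating the rotation through the normalization, quoting the stretching lemma, and adding up $d-1$ equal coordinate errors — is routine.
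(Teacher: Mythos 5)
Your proof follows the paper's route exactly: quote Observation~\ref{obsTauInCircle} to place $\tau(x^*)$ in $\B^{d-1}_1$, apply Lemma~\ref{lemGeoStretching} to pick up the stretch factor $2$, and sum the $d-1$ coordinatewise errors of $\varepsilon/(2\sqrt{d-1})$ to get $\lVert y-\tau(x^*)\rVert_2\le\varepsilon/2$, yielding $\lVert\sigma(y)-x^*\rVert\le\varepsilon$. The ``exactly on the sphere'' half is handled the same way, via $\operatorname{img}\sigma\subseteq\S^{d-1}$, rationality preservation, and exact arithmetic in the last step.

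The genuine difference is that you flag, and correctly patch, a gap the paper's proof glosses over: Lemma~\ref{lemGeoStretching} needs \emph{both} arguments in $\B^{d-1}_1$, and Observation~\ref{obsTauInCircle} only supplies this for $\tau(x^*)$, not for the approximant $y$; the algorithm as written does not constrain $y$ to lie in the ball, and (as you compute) a naive triangle inequality $\lVert y\rVert_2\le\lVert\tau(x^*)\rVert_2+\varepsilon/2$ can exceed $1$ once $d$ is on the order of a few hundred, because the Observation's bound tends to $1$. Your remedy — interpret the ``choose'' in statement~\ref{algoStatementChoose} as rounding each coordinate towards the origin so that $|y_i|\le|\tau_i(x^*)|$ and hence $\lVert y\rVert_2\le\lVert\tau(x^*)\rVert_2<1$ — is exactly right, costs nothing, and is in fact also what Lemma~\ref{lemDenomSize} and Theorem~\ref{thmAlgoDenSize} tacitly need, since they too assume the preimage is in $\B^{d-1}_1$. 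So your proposal is a faithful reconstruction that is, if anything, slightly more careful than the published proof.
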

\begin{proof}
Let $x^* = x /\lVert x \rVert_2$ and $\sigma(y)$ denote the result.
Given the rotation, $x^*$ holds for Observation \ref{obsTauInCircle}.
Hence, we can use Lemma \ref{lemGeoStretching} to derive
\begin{align*}
\lVert \sigma(y) - x^*\rVert_\infty
&= \lVert \sigma(y) - \sigma(\tau(x^*))\rVert_\infty\\
&\leq \lVert \sigma(y) - \sigma(\tau(x^*))\rVert_2 \\
&\leq 2 \lVert y - \tau(x^*) \rVert_2 \\
&\leq2\sqrt{(d-1) \frac{\varepsilon^2}{4(d-1)}} = \varepsilon 
\end{align*}
as upper bound on the approximation error.
\end{proof}
This analysis is rather tight, as demonstrated by the red curve and points in Figure \ref{figQualityAndSize3d}.

\subsection{Denominator Sizes}\label{secDenomSize}
We now describe a relation between rational images of $\sigma$ and the lowest common multiple of  denominators of its rational pre-images.
This leads to several strategies for achieving small denominators in the results of {Algorithm \ref{algoSnapping}}.
\begin{lemma}[Size of images under $\sigma$] \label{lemDenomSize}
Let $x \in {\Q^{d-1} \cap \B^{d-1}_1}$ with $x_i = p_i/q_i$ and $Q = lcm(q_1, \ldots, q_{d-1})$ be the lowest common multiple, then 
$$
\sigma_k \left(x\right)= \frac{n_k}{m}
$$ 
with integers $n_i, m \in \{-2Q^2, \ldots , 2Q^2\}$ for all $1\leq k \leq d$.
\end{lemma}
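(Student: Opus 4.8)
The plan is to clear all denominators simultaneously and then read the required integers straight off the definition of $\sigma$. First I would rewrite every coordinate over the single common denominator $Q$: since $Q/q_i$ is an integer, putting $a_i := p_i\,(Q/q_i) \in \Z$ gives $x_i = a_i/Q$ for $1 \le i \le d-1$. Consequently $S^2 = \sum_{i=1}^{d-1} x_i^2 = A/Q^2$, where $A := \sum_{i=1}^{d-1} a_i^2$ is a nonnegative integer, so $1+S^2 = (Q^2+A)/Q^2$ and $-1+S^2 = (A-Q^2)/Q^2$.

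Substituting into the formula for $\sigma$ and cancelling one factor $Q^2$ then yields, for $1 \le k < d$,
\[
\sigma_k(x) = \frac{2a_k/Q}{(Q^2+A)/Q^2} = \frac{2a_k Q}{Q^2+A},
\qquad
\sigma_d(x) = \frac{A-Q^2}{A+Q^2}.
\]
So I would take the common value $m := Q^2 + A$ as the denominator, with numerators $n_k := 2a_k Q$ for $k<d$ and $n_d := A - Q^2$; the first step guarantees all of these are integers.

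It remains to bound them, and this is precisely where the hypothesis $x \in \B^{d-1}_1$ is used. From $\lVert x \rVert_2 \le 1$ we get $S^2 \le 1$, i.e. $A \le Q^2$; together with $A \ge 0$ and $Q \ge 1$ this gives $1 \le Q^2 \le m = Q^2 + A \le 2Q^2$ and $|n_d| = |A - Q^2| \le Q^2 \le 2Q^2$. For the remaining coordinates, $|x_k| \le \lVert x \rVert_2 \le 1$ forces $|a_k| \le Q$, hence $|n_k| = 2|a_k|Q \le 2Q^2$. Thus $n_1,\dots,n_d,m \in \{-2Q^2,\dots,2Q^2\}$, as claimed.

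The argument is essentially bookkeeping, so I do not expect a genuine obstacle; the two points that need care are (i) that every quantity produced ($a_i$, $A$, $m$, $n_k$) really is an integer, which hinges on $Q$ being a common \emph{multiple} of the $q_i$ rather than, say, their product taken in lowest terms, and (ii) that the size bound genuinely exploits membership in the unit ball $\B^{d-1}_1$ (via $A \le Q^2$ and $|a_k| \le Q$) and would fail for an arbitrary rational point of $\R^{d-1}$.
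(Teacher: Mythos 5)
Your proof is correct and follows essentially the same route as the paper's: you write $x_i = a_i/Q$ with $a_i = p_i\,(Q/q_i)$, which is exactly the paper's $q'_i p_i$, and the bounds $A \le Q^2$ and $|a_k| \le Q$ from membership in $\B^{d-1}_1$ correspond directly to the paper's bounds $\sum_i {q'_i}^2 p_i^2 \le Q^2$ and $|p_i| \le q_i$. The only difference is notational bookkeeping.
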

\begin{proof}
Let $q'_i \in \{1,\ldots,Q\}$ such that $q'_i \cdot q_i = Q$ for all $i$.
Since the formula of $\sigma$ is similar in all but the last dimension, we describe the following two cases.
For $k=d$, we have
\begin{align*}
\sigma_k \left(x\right)
&=\frac{-1 + \sum_{i=1}^{d-1}p_i^2 / q_i^2}{1 + \sum_{i=1}^{d-1}p_i^2 / q_i^2} 
=\frac{-Q^2 + \sum_{i=1}^{d-1}{q'_i}^2p_i^2 }{Q^2 + \sum_{i=1}^{d-1}{q'_i}^2p_i^2} =: \frac{n_k}{m}
\end{align*}
Using the bound $x \in \B^{d-1}_1$, we have $0 \leq \sum_{i=1}^{d-1}{q'_i}^2p_i^2 \leq Q^2$ and we derive for $n_k$ and $m$
\begin{align*}
	|n_k| &= \Big|-Q^2+ \sum_{i=1}^{d-1}{q'_i}^2p_i^2 \Big| \leq Q^2\\
	m &= Q^2 + \sum_{i=1}^{d-1}{q'_i}^2p_i^2 \leq 2Q^2
\end{align*}
	
For $k < d$, we have
\begin{align*}
	\sigma_k \left(x\right)
	&=\frac{2 p_k/q_k}{1 + \sum_{i=1}^{d-1}p_i^2 / q_i^2} \\
	&= \frac{Q^2 \cdot 2p_k/q_k }{Q^2 + \sum_{i=1}^{d-1}{q'_i}^2 p_i^2 } \\
	&=\frac{ Qq'_k \cdot 2 p_k }{Q^2 + \sum_{i=1}^{d-1}{q'_i}^2 p_i^2  }  =: \frac{n_k}{m}
\end{align*}
Using the bound $x \in \B^{d-1}_1$, we have that each $|p_i| \leq q_i $ and this bounds $|n_k| = Qq'_k \cdot 2 |p_k| \leq 2Q^2$.
We already discussed the bound on $m$ in the first case.
\end{proof}

Note that we apply this lemma in practice with fixed-point binary numbers $p_i/q_i \in P_s$.
Meaning all $q_i = 2^s = Q$ for some significant size $s$.
\begin{theorem}\label{thmAlgoDenSize}
	Denominators in $\varepsilon$-approximations of Algorithm \ref{algoSnapping} are at most 
	$$\frac{10(d-1)}{\varepsilon^2} ~.$$
\end{theorem}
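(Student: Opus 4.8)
The plan is to combine the approximation-quality step (Statement~\ref{algoStatementChoose} of Algorithm~\ref{algoSnapping}) with the denominator bound of Lemma~\ref{lemDenomSize}. The key observation is that in Statement~\ref{algoStatementChoose} we are free to \emph{choose} the rational point $y\in\Q^{d-1}$, so we may pick each coordinate $y_i$ to be a fixed-point binary number with a common denominator $Q=2^s$, i.e.\ $y\in P_s^{d-1}$. Lemma~\ref{lemDenomSize} then tells us that the denominator of $\sigma(y)$ is at most $2Q^2 = 2\cdot 4^s$, so the whole task reduces to bounding how large $s$ needs to be to meet the per-coordinate error requirement $|y_i - \tau_i(x/\lVert x\rVert_2)| \leq \varepsilon/(2\sqrt{d-1})$.

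\textbf{Main steps.}
First I would note that rounding each coordinate of $\tau(x/\lVert x\rVert_2)$ to the nearest multiple of $2^{-s}$ incurs an error of at most $2^{-s-1}$ per coordinate; moreover, by Observation~\ref{obsTauInCircle} the true point $\tau(x/\lVert x\rVert_2)$ lies in $\B^{d-1}_1$, and since rounding changes each coordinate by at most $2^{-s-1}$, for $s$ not too small the rounded point $y$ still lies in $\B^{d-1}_1$, so Lemma~\ref{lemDenomSize} applies. (One must be slightly careful here: rounding a point with norm close to $1$ could push it marginally outside the unit ball; a clean fix is to round toward the origin, or to observe that Observation~\ref{obsTauInCircle} gives a norm strictly bounded away from $1$ by $\sqrt{(\sqrt d-1)/(\sqrt d+1)}$, leaving slack for the perturbation.) Second, I would solve the inequality $2^{-s-1} \leq \varepsilon/(2\sqrt{d-1})$ for the smallest admissible $s$, giving $2^s \geq \sqrt{d-1}/\varepsilon$, and then pick $s = \lceil \log_2(\sqrt{d-1}/\varepsilon)\rceil$, so that $2^s < 2\sqrt{d-1}/\varepsilon$. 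Third, I would plug $Q=2^s$ into the Lemma~\ref{lemDenomSize} bound:
\[
m \;\leq\; 2Q^2 \;<\; 2\left(\frac{2\sqrt{d-1}}{\varepsilon}\right)^2 \;=\; \frac{8(d-1)}{\varepsilon^2}.
\]
Finally I would reconcile the constant $8$ with the claimed $10$: the gap absorbs the rounding of $s$ up to an integer and the need to keep $y$ inside $\B^{d-1}_1$ despite perturbation. In fact one can afford to take $s$ one larger than the bare minimum when $\tau(x/\lVert x\rVert_2)$ is near the boundary of the ball, and $2\cdot 4^{s+?}$ together with the $\varepsilon \leq 1/8$ hypothesis (which controls lower-order terms) still fits under $10(d-1)/\varepsilon^2$; alternatively, a sharper use of Observation~\ref{obsTauInCircle} — namely $\lVert\tau(x/\lVert x\rVert_2)\rVert_2^2 \leq (\sqrt d-1)/(\sqrt d+1)$ — shrinks the relevant region enough that the crude $2Q^2$ can be replaced by something smaller, again landing inside $10(d-1)/\varepsilon^2$.

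\textbf{Expected main obstacle.}
The routine part is the arithmetic with $\log_2$ and the ceiling; the genuinely delicate point is \emph{ensuring the rounded point $y$ still satisfies the hypothesis $y\in\B^{d-1}_1$ of Lemma~\ref{lemDenomSize}} — otherwise the denominator bound $2Q^2$ is not available, and $\sigma$ could even stretch badly (Lemma~\ref{lemGeoStretching} also needs $y\in\B^{d-1}_1$). Handling this cleanly is exactly what forces the constant up from $8$ to $10$, and I would make it precise either by rounding toward the origin or by exploiting the strict inequality in Observation~\ref{obsTauInCircle}. A secondary bookkeeping issue is that the $\varepsilon\leq\tfrac18$ hypothesis is presumably used to swallow the effect of integer rounding of $s$ into the final constant, so I would keep track of where that assumption enters.
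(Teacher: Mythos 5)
Your plan is structurally the same as the paper's: choose $y$ with a single common denominator $Q$, apply Lemma~\ref{lemDenomSize} to bound the denominator of $\sigma(y)$ by $2Q^2$, and then bound $Q$ in terms of $\varepsilon$ and $d$. The paper does \emph{not} restrict $Q$ to a power of two: it sets $Q=\lceil 2\sqrt{d-1}/\varepsilon\rceil$ (one-sided truncation error $\leq 1/Q \leq \varepsilon/(2\sqrt{d-1})$), and uses $\varepsilon\leq\tfrac18$ together with $d\geq 2$ to absorb the $+1$ from the ceiling: $2Q^2 \leq 2\left(1+\tfrac{2\sqrt{d-1}}{\varepsilon}\right)^2 = \tfrac{2}{\varepsilon^2}\left(\varepsilon^2+4\varepsilon\sqrt{d-1}+4(d-1)\right) \leq \tfrac{10(d-1)}{\varepsilon^2}$, since $\varepsilon^2+4\varepsilon\sqrt{d-1}\leq d-1$. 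So the role of $\varepsilon\leq\tfrac18$ is to absorb the ceiling of $Q$, not of $s$.

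The genuine gap in your argument is that neither of the two fixes you offer for the membership $y\in\B^{d-1}_1$ actually lands under $10(d-1)/\varepsilon^2$ once you commit to $Q=2^s$. If you round toward the origin (truncation), the per-coordinate error becomes one-sided and of size $\leq 2^{-s}$, so meeting the step-\ref{algoStatementChoose} requirement forces $2^s\geq 2\sqrt{d-1}/\varepsilon$; with the power-of-two constraint this gives $Q<4\sqrt{d-1}/\varepsilon$ and hence $2Q^2<32(d-1)/\varepsilon^2$, which overshoots the stated bound. If instead you keep nearest rounding and appeal to the slack in Observation~\ref{obsTauInCircle}, that slack is $1-\sqrt{(\sqrt{d}-1)/(\sqrt{d}+1)}\sim 1/\sqrt{d}$, which drops below the $\ell_2$-perturbation budget $\varepsilon/2$ (which can be as large as $1/16$) once $d$ is moderately large (around $d\gtrsim 240$), so the rounded $y$ is not guaranteed to stay in $\B^{d-1}_1$ and Lemma~\ref{lemDenomSize} is not applicable. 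The clean resolution is exactly the paper's: drop the power-of-two restriction, take $Q=\lceil 2\sqrt{d-1}/\varepsilon\rceil$ and truncate toward zero, which simultaneously meets the error requirement and keeps $\lVert y\rVert_2\leq\lVert\tau(x/\lVert x\rVert_2)\rVert_2<1$ for all $d$, at the price of the $+1$ that $\varepsilon\leq\tfrac18$ then absorbs.
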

\begin{proof}
Using standard multi-precision floating point arithmetics allows to derive rational values $y$, with denominators that are $Q=\lceil \frac{2\sqrt{d-1}}{\varepsilon}\rceil$.
Using $\varepsilon \leq 1/8$ and Lemma \ref{lemDenomSize} bounds the size of the denominators in images $\sigma$ with
\begin{align*}
2Q^2 &\leq 2\left(1+ \frac{2\sqrt{d-1}}{\varepsilon}\right)^2\\
&= \frac{2}{\varepsilon^2}\left( \underbrace{\varepsilon^2 +\varepsilon4\sqrt{d-1}}_{\leq (d-1)} + 4(d-1) \right)~.
\end{align*}
\end{proof}

For certain dimensions and in practice(c.f. Section \ref{secExpQualityAndSize}), we can improve on the simple usage of fixed-point binary numbers.
For $\S^1$ we can rely on the continued fraction algorithm to derive rational approximations of $\alpha = \tau(x/\lVert x\rVert_2)$ with $ \left| \alpha - p/q\right| \leq 1/2q^2$.
Using this in Algorithm \ref{algoSnapping} leads to approximations with $\varepsilon=1/q^2$ on the circle $\S^1$ with denominators of at most $2q^2$.

Note that for $\S^d$ with $d\geq 2$ one can rely on algorithms for simultaneous Diophantine approximations (c.f. Theorem \ref{thmDirichlet}) to keep the lowest common multiple $Q$ in Lemma \ref{lemDenomSize} small.
Note that it might well be \emph{simpler} to find Diophantine approximations with small $Q$.

There have been many approaches to find generalizations of the continued fraction algorithm for $d>1$.
One of the first approaches is the Jacobi-Perron algorithm, which is rather simple to implement\cite{tachii1994}(c.f. Section \ref{secExpQualityAndSize}).
More advanced approaches \cite{Pethoe2017422} rely on the LLL-algorithm for lattice basis reduction\cite{Lenstra82factoringpolynomials}.
For $d=2$ there is an algorithm to compute all Dirichlet Approximations\cite{Jurkat1979}, which we find hard to oversee given its extensive presentation.
Moreover, their experimental comparison shows that the Jacobi-Perron algorithm is practically well suited for $d=2$.

We close this section with a transfer result of Theorem \ref{thmDirichlet} with our Theorem \ref{thmQuality} and Lemma \ref{lemDenomSize}.
\begin{corollary}\label{corDirichletSnapping}
Let $x \in \S^{d-1}$ and $N \in \N$.
There is $p \in \Z^{d-1}$ and $q \in \{1,\ldots,N\}$ with  
$$
	\left\lVert x - \sigma\left(\frac{1}{q}p\right) \right\rVert_\infty \leq \frac{2\sqrt{d-1}}{q\sqrt[d-1]{N}}
$$
and all denominators of $\sigma\left(\frac{1}{q}p\right)$ are at most $2q^2$.
\end{corollary}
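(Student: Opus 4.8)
The plan is to run Algorithm \ref{algoSnapping} with a simultaneous Diophantine approximation furnished by Theorem \ref{thmDirichlet} in place of the free rational point of step \ref{algoStatementChoose}, so that the error bound is read off from the proof of Theorem \ref{thmQuality} (equivalently from Lemma \ref{lemGeoStretching}) and the denominator bound from Lemma \ref{lemDenomSize}. First I would apply the coordinate rotation of Algorithm \ref{algoSnapping}, so that we may assume $x_d = \min_i -|x_i|$ without changing any coordinate value; Observation \ref{obsTauInCircle} then gives $\tau(x) \in \B^{d-1}_1$, in particular $\tau_i(x) \in [-1,1]$ for every $i$.

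Next I would feed the vector of fractional parts $\alpha_i := \tau_i(x) - \lfloor \tau_i(x)\rfloor \in [0,1)$ into Theorem \ref{thmDirichlet} with its dimension parameter equal to $d-1$, obtaining $q \in \{1,\dots,N\}$ and $p' \in \Z^{d-1}$ with $|\alpha_i - p'_i/q| \le 1/(q\sqrt[d-1]{N})$; undoing the shift via $p_i := p'_i + q\lfloor\tau_i(x)\rfloor$ produces $p \in \Z^{d-1}$ with $|\tau_i(x) - p_i/q| \le 1/(q\sqrt[d-1]{N})$ for all $i$, hence $\lVert \tau(x) - \tfrac1q p\rVert_2 \le \sqrt{d-1}/(q\sqrt[d-1]{N})$. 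Writing $y := \tfrac1q p$, using $\sigma(\tau(x)) = x$ and Lemma \ref{lemGeoStretching} (both $\tau(x)$ and $y$ lying in $\B^{d-1}_1$), I would conclude
\[
\lVert x - \sigma(y)\rVert_\infty \le \lVert x - \sigma(y)\rVert_2 = \lVert \sigma(\tau(x)) - \sigma(y)\rVert_2 \le 2\lVert \tau(x) - y\rVert_2 \le \frac{2\sqrt{d-1}}{q\sqrt[d-1]{N}},
\]
which is the claimed bound, and is exactly the computation in the proof of Theorem \ref{thmQuality} for $\varepsilon = 2\sqrt{d-1}/(q\sqrt[d-1]{N})$. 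For the denominators, Lemma \ref{lemDenomSize} applies to $y \in \Q^{d-1} \cap \B^{d-1}_1$ with every $q_i = q$, so the relevant lowest common multiple $Q = \operatorname{lcm}(q,\dots,q)$ divides $q$ and each $\sigma_k(y)$ is a fraction of denominator at most $2Q^2 \le 2q^2$.

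Apart from these substitutions the argument is bookkeeping, and I expect the only genuine obstacle to be the hypothesis shared by Lemmas \ref{lemGeoStretching} and \ref{lemDenomSize}: that $y = \tfrac1q p$ actually lie in $\B^{d-1}_1$. Since $\lVert \tau(x)\rVert_2 \le \sqrt{(\sqrt d-1)/(\sqrt d+1)} < 1$ by Observation \ref{obsTauInCircle}, the triangle inequality places $y$ in the ball as soon as the error $\sqrt{d-1}/(q\sqrt[d-1]{N})$ stays below the slack $1 - \lVert \tau(x)\rVert_2$, which is the generic situation. In the complementary regime this error, hence the target $2\sqrt{d-1}/(q\sqrt[d-1]{N})$, is so large that $N$ is forced to be bounded purely in terms of $d$, and I would instead return a coarser rational point that manifestly lies in $\B^{d-1}_1$ (for instance the south pole $\sigma(0)$, or an equator point $\sigma(\pm e_i)$) and verify that it still meets the then-weak error target and the denominator bound. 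Turning that case split into a clean statement, rather than carrying out the main chain of inequalities, is where I expect the effort to go.
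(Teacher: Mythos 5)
Your reconstruction of the intended argument is, as far as one can tell, exactly the paper's: the text calls the corollary ``a transfer result of Theorem~\ref{thmDirichlet} with our Theorem~\ref{thmQuality} and Lemma~\ref{lemDenomSize}'' and supplies no further detail, and your chain --- rotate so that Observation~\ref{obsTauInCircle} applies, reduce the coordinates of $\tau(x)$ modulo $1$ so that Theorem~\ref{thmDirichlet} applies in dimension $d-1$, undo the shift, control the error via Lemma~\ref{lemGeoStretching}, and control the denominator via Lemma~\ref{lemDenomSize} with $Q=q$ --- is the only sensible one. The shift/unshift bookkeeping and the $\ell_2$-to-$\ell_\infty$ relaxation are handled correctly, as is the observation that the rotation is a signed coordinate permutation and so preserves both the $\ell_\infty$ error and the denominator bound.

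You are also right to flag $p/q \in \B^{d-1}_1$ as a genuine gap: it is a hypothesis of both Lemma~\ref{lemGeoStretching} and Lemma~\ref{lemDenomSize}, the paper invokes these lemmas silently, and nothing in Theorem~\ref{thmDirichlet} forces the approximant into the ball. However, the fallback you sketch does not close the gap. The problematic case arises when $N^{1/(d-1)} \le \sqrt{d-1}/\bigl(1-\lVert\tau(x)\rVert_2\bigr)$, in which range the target with $q=1$ is bounded below by $2\bigl(1-\lVert\tau(x)\rVert_2\bigr)$ and (near the threshold) can be essentially equal to it. By Observation~\ref{obsTauInCircle} the slack $1-\lVert\tau(x)\rVert_2$ attains the value $1-\sqrt{(\sqrt d-1)/(\sqrt d+1)}$ at $x$ with all $|x_i| = 1/\sqrt d$. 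For that worst-case $x$, \emph{every} admissible $q=1$ candidate --- i.e.\ $p\in\Z^{d-1}$ with $\lVert p\rVert_2\le 1$, which restricts $p$ to $0$ or $\pm e_i$ if Lemma~\ref{lemDenomSize} is to apply --- satisfies $\lVert x-\sigma(p)\rVert_\infty = 1-1/\sqrt d$, and one checks $1-1/\sqrt d > 2\bigl(1-\sqrt{(\sqrt d-1)/(\sqrt d+1)}\bigr)$ already for $d\ge 8$. So the coarse-point repair fails even for moderate $d$; the ``complementary regime'' is not trivial. Closing the corollary cleanly seems to require either a hypothesis ensuring $N$ is large enough that the Dirichlet approximant necessarily stays in $\B^{d-1}_1$, or a simultaneous-approximation result constrained to the ball --- neither of which the paper spells out. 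Your diagnosis of the missing piece is accurate, but the patch as proposed would not survive scrutiny.
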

This existence statement allows for brute-force computations.
However, we just use it for comparisons in Section \ref{secExpQualityAndSize}.

\section{Implementation}
Apart from \cite{Teillaud2010} for $\S^2$, most implementations of spherical Delaunay triangulations are not `stable'.
Approaches based on $d$-dimensional convex hull algorithms produce only a tessellation for input not exactly \emph{on} $\S^{d-1}$. (c.f. Section \ref{secRelatedWork})

Few available implementations allow dynamic point or constraint insertion and deletion -- not even in the planar case of $\R^2$.
The `Computational Geometry Algorithms Library' (CGAL \cite{cgal47}) is, to our knowledge, the sole implementation providing dynamic insertions/deletions of points \emph{and} constraint line segments in $\R^2$. 

With \cite{libratss}, we provide open-source implementations of Algorithm \ref{algoSnapping} for $\S^d$. In \cite{libdts2}, we provide an implementation for spherical Delaunay triangulations on $\S^2$ with $\varepsilon$-stable constructions of intersection points of constraint line-segments (c.f. Section \ref{secEpsStableConstructions}).

\subsection{RATional Sphere Snapping for $\S^d$}
\href{http://www.github.com/fmi-alg/libratss}{Libratss} is a C++ library which implements Algorithm \ref{algoSnapping}, based on the open-source GMP library for exact rational arithmetics \cite{Granlund12} and the GNU `Multiple Precision Floating-Point Reliably'(MPFR) library\cite{Fousse:2007:MMB:1236463.1236468}.
The implementation allows both, input of Cartesian coordinates of arbitrary dimension and spherical coordinates of $\S^2$.
Note that this implementation allows geometric algorithms, as for $d$-dimen\-sional convex hull, to rely on rational input points that are \emph{exactly} on $\S^{d-1}$.
In light of the discussion on the denominator sizes in Section \ref{secDenomSize}, we provide two additional strategies to fixed-point snapping, as analyzed in Theorem \ref{thmQuality}.
We implemented the Continued Fraction Algorithm to derive rational $\varepsilon$-approximations with small denominators and the Jacobi-Perron algorithm for $\S^2$.
The library interface also allows to automatically chose the approximation method which results in smaller denominators, approximation errors or other objectives, like byte-size.

\subsection{Incremental Constrained Delaunay Triangulation on $\S^2$}
\href{http://www.github.com/fmi-alg/libdts2}{Libdts2} implements an adapter for the \emph{dynamic constraint} Delaunay triangulation in the Euclidean plane $\R^2$ of CGAL.
Since this implementation requires an initial outer face, we introduce an small triangle, that only contains the north-pole, to allow subsequent insertions of points and constraints.
For points \emph{exactly} on the unit sphere, the predicate \textbf{`is $A$ in the circumcircle of $B,C$ and $D$'} reduces to the well studied predicate \textbf{`is $A$ above the plane through $B,C$ and $D$'}.
The implementation overloads all predicate functions accordingly and uses Algorithm \ref{algoSnapping} for the construction of rational points on the sphere for intersections of Great Circle segments.

\subsubsection{$\epsilon$-stable geometric constructions} \label{secEpsStableConstructions}
Any means of geometric construction that allows to approximate a certain point, can be used as input for Algorithm \ref{algoSnapping} -- E.g. the intersection of Great Circle segments.
Consider two intersecting segments of rational points on $\S^2$. 
The two planes, containing the segments and the origin as a third point, intersect in a straight line.
Each (rational) point on this line can be used as input for our method, as they identify the two intersection points on the sphere.
Using such input for {Algorithm~\ref{algoSnapping}} allows simple schemes to derive stable geometric constructions of rational points on $\S^d$ within a distance of $\varepsilon$ to the target point.

\section{Experiments}
We used real world and synthetic data for our experiments.
Geo-referenced data was sampled from regional extracts from the OpenStreetMap project\cite{osm}, as of January $26$th, 2017.
Random Cartesian coordinates of points on $\S^d$ were created with the uniform generator 2 of \cite{marsaglia1972choosing}.
All benchmarks were conducted on a single core of an Intel Xeon E5-2650v4.
Peak memory usage and time were measured using the \verb|time| utility.

\begin{figure}
	\centering
	\includegraphics[width=\columnwidth]{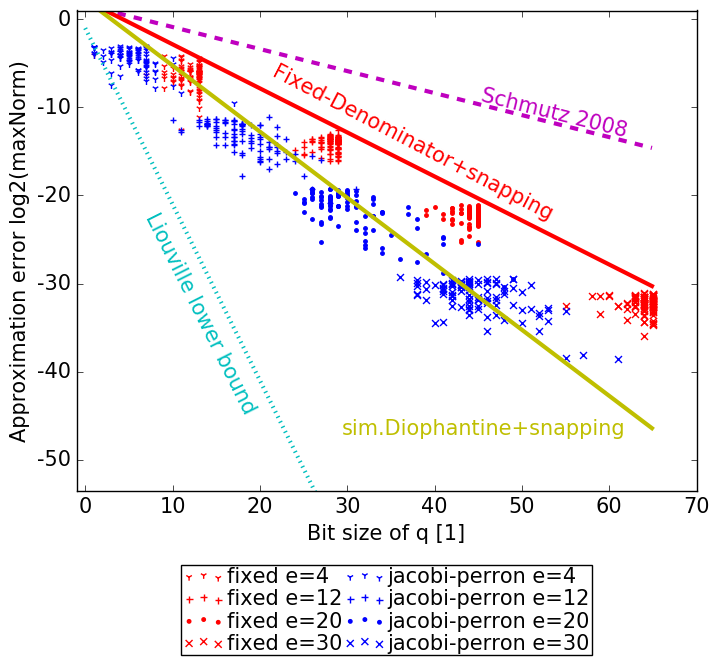} 
	\vspace{-.7cm}
	\caption{Approximation quality and denominator size of $100$ random points on $\S^2$ for various levels of target precision $e$ and approximation strategies (red, blue) of Algorithm \ref{algoSnapping}.
		Theoretic bounds are indicated with lines.} \label{figQualityAndSize3d}
	\vspace{-.1cm}
\end{figure}

\subsection{Approximation Quality and Size} \label{secExpQualityAndSize}
We experimentally analyze the actual approximation error in results of Algorithm \ref{algoSnapping} for several levels of $\varepsilon$ using the MPFR library.
In this section $e$ denotes the significands required in statement \ref{algoStatementChoose} of Algorithm \ref{algoSnapping} for the required result precision $\varepsilon$. This is 
$$
e = \left\lceil -\log_2 \left( \frac{\epsilon}{2\sqrt{d-1}} \right) \right\rceil \quad.
$$
We simply setup the MPFR data types with significand sizes up to $1024$ Bits, and conducted our experiments on much lower levels of $e$.
This allows us to derive some `measure' of the actual approximation errors of our method.

We analyzed the approximation errors $\delta$ and denominator bit-sizes $q$ for 
$100$ random points on $\S^2$. 
Figure \ref{figQualityAndSize3d} compares the results of our algorithm under several levels of target precision $e$ and strategies for statement \ref{algoStatementChoose} in our method.
The magenta line indicates the quality and size of the approach in \cite{Schmutz2008}.
The red line indicates the bounds of our Theorems \ref{thmQuality} and \ref{thmAlgoDenSize} on the fixed-point strategy, while the yellow line indicates the bound of Corollary \ref{corDirichletSnapping}.
Note that results using the Jacobi-Perron strategy (blue dots) allows our method to further improve on the fixed-point strategy (red dots).
Note that we use Liouville's lower bound as statement on the approximability of a worst-case point.
There might well be points of higher algebraic degree that allow better approximations (c.f. Section \ref{sectLowerBounds}).

Table \ref{table:snappingToRational} exhibits average approximation errors $\delta$, denominator bit-sizes $q$ and the computation time $t$ of our method for millions of points.
Synthetic data sets have several dimensions, while the real world data sets have dimension $3$.
For $\S^2$, we provide comparison of the fixed-point strategy (fx) with the Jacobi-Perron strategy (jp) of our method.
Using $e=31$ is sufficient to obtain results \emph{exactly on} $\S^2$ with a $\delta$ of less than $1$cm, relative to a sphere with radius of the earth.
This is enough for most applications dealing with spatial data \emph{and} allows storage within the word size of contemporary computing hardware.
This allows practical applications on $\S^2$ to store $4$ integer long values for the $3$ numerators and the common denominator (c.f. Lemma \ref{lemDenomSize}) occupying $32$ Bytes.
Note that storing $3$ double values occupies $24$ Bytes but \emph{cannot} represent Cartesian coordinates \emph{exactly on} the sphere.

\begin{table}
\small
\setlength{\tabcolsep}{0.3em} 
\begin{tabular}{rr|r|r|r|r|r}
&&Germany&Planet&u.a.r $\mathbb{S}^2$&u.a.r $\mathbb{S}^{9}$&u.a.r $\mathbb{S}^{99}$\\
\multicolumn{2}{c|}{dimension}&3&3&3&10&100\\
\multicolumn{2}{c|}{size $[10^3]$}&2,579.6&3,702.4&1,000.0&1,000.0&100.0\\
\hline
\multicolumn{7}{c}{$e$=23}\\
\hline
\multirow{3}{*}{fx}&$\delta [ m ]$&0.7&0.7&0.7&1.0&3.2\\
&q $[ 1 ]$&46.0&46.0&46.0&46.0&46.0\\
&t $[ \mu s ]$&17&16&16&117&546\\
\cline{1-7}
\multirow{3}{*}{jp}&$\delta [ m ]$&0.4&0.4&0.5&-&-\\
&q $[ 1 ]$&33.6&34.2&34.1&-&-\\
&t $[ \mu s ]$&63&57&58&-&-\\
\hline
\multicolumn{7}{c}{$e$=31}\\
\hline
\multirow{3}{*}{fx}&$\delta [ m ]$&2.7e-3&2.6e-3&2.8e-3&4.0e-3&12.6e-3\\
&q $[ 1 ]$&62.0&62.0&62.0&62.0&62.0\\
&t $[ \mu s ]$&17&16&17&118&554\\
\cline{1-7}
\multirow{3}{*}{jp}&$\delta [ m ]$&1.7e-3&1.7e-3&1.8e-3&-&-\\
&q $[ 1 ]$&45.2&45.8&45.8&-&-\\
&t $[ \mu s ]$&77&72&73&-&-\\
\hline
\multicolumn{7}{c}{$e$=53}\\
\hline
\multirow{3}{*}{fx}&$\delta [ m ]$&6.3e-10&6.2e-10&6.6e-10&9.6e-10&30.1e-10\\
&q $[ 1 ]$&106.0&106.0&106.0&106.0&106.0\\
&t $[ \mu s ]$&16&16&17&118&548\\
\cline{1-7}
\multirow{3}{*}{jp}&$\delta [ m ]$&3.9e-10&3.9e-10&4.3e-10&-&-\\
&q $[ 1 ]$&77.2&77.8&77.7&-&-\\
&t $[ \mu s ]$&118&111&112&-&-\\
\hline
\multicolumn{7}{c}{$e$=113}\\
\hline
\multirow{3}{*}{fx}&$\delta [ m ]$&5.5e-28&5.4e-28&5.7e-28&8.3e-28&26.1e-28\\
&q $[ 1 ]$&226.0&226.0&226.0&226.0&226.0\\
&t $[ \mu s ]$&19&19&19&126&617\\
\cline{1-7}
\multirow{3}{*}{jp}&$\delta [ m ]$&3.4e-28&3.4e-28&3.7e-28&-&-\\
&q $[ 1 ]$&164.5&165.1&165.1&-&-\\
&t $[ \mu s ]$&219&218&220&-&-\\
\end{tabular}
\caption{Mean-values of approximation error $\delta~[m]$, denominator bit-size $q~[1]$ and computation time $t~[\mu s]$ for synthetic and real-world point sets for various dimensions and levels of target precision $e$.
The Jacobi-Perron strategy is denoted by `jp' and the fixed-point strategy by `fx'.}
\label{table:snappingToRational}
\end{table}

\begin{table}[]
	\centering
	\setlength{\tabcolsep}{0.3em} 
	\begin{tabular}{r|rrrr}
		&Saarland	&Germany	&Europe		&Planet \\
		\hline
		\multicolumn{5}{c}{Input} \\
		\hline
		Segments $\left[10^6\right]$	&$0.32$		&$25.75$		&$222.92$		&$668.61$\\
		\hline
		\multicolumn{5}{c}{Output}\\
		\hline
		Vertices $\left[10^6\right]$	&$0.29$	&$24.45$	&$213.01$	&$634.42$\\
		Edges $\left[10^6\right]$		&$0.87$	&$73.37$	&$639.04$	&$1,903.27$\\
		Faces $\left[10^6\right]$		&$0.58$	&$48.91$	&$426.03$	&$1,268.84$\\
		\hline
		\multicolumn{5}{c}{Resource usage}\\
		\hline
		Time	[h:m]		&$<0$:$01$	&$19$:$27$	&$3$:$21$	&$12$:$04$\\
		Memory	[GiB]		&$0.3$		&$20.4$		&$182$		&$545$\\
	\end{tabular}
	\caption{Time and memory usage to compute spherical Delaunay triangulations for OpenStreetMap data sets.}\label{tab:dtbenchmark}
\end{table}

\subsection{Constrained Delaunay Triangulation with Intersection Constructions} \label{secConstrDelaunayWithInter}
A Constrained Delaunay Triangulation of a point set contains required line-segments as edges, but is as close to the Delaunay triangulation as possible \cite{Chew:1987:CDT:41958.41981}.
We used \emph{very large} street networks of several regions from the OpenStreetMap project for points and constraint edges -- E.g. each line-segment of a street is an edge in the result triangulation.
Since $\sim0.5\%$ of the line-segments in these data sets intersect, we approximated the intersection points using $e=31$ for Algorithm \ref{algoSnapping}.
Table \ref{tab:dtbenchmark} exhibits total running time, peak memory usage and the result sizes of our \verb|libdts2| implementation.
Small data sets like Saarland and Germany allow quick calculation on a recent workstation computer.
See Figure \ref{figMotivation} for the Ecuador dataset.
Note that the current implementation has a storage overhead for each point, as we keep the results of the GMP library rather than truncating to integers of architectures word size.
Computing the triangulation for the planet data set was only possible on rather powerful hardware with at least $550$ Gigabytes of memory taking half a day.

\section{Open Problems}
From a practical point of view, it is of great interest to bound the storage size of denominators to a maximum of $64$Bits -- the word size of current computing architectures.
We seek to improve our (already satisfactory) results by using advanced algorithms for simultaneous approximation, like the LLL-algorithm or the Dirichlet approximation algorithm for $\S^2$.

For the theoretical part, we are interested if finding simultaneous rational approximations with small lowest common multiple of the denominators is simpler than finding Dirichlet approximations.
We are also interested in generalizing the method to provide rational approximations with small \emph{absolute} errors on ellipsoids with rational semi-principal axes -- E.g. the geographic WGS84 ellipsoid.

\clearpage
	
	\bibliographystyle{abbrv} 
	\bibliography{./ref/sphericalDelaunay.bib}

\begin{thebibliography}{10}

\bibitem{self}
D.~Bahrdt and M.~P. Seybold.
\newblock {\href{http://doi.acm.org/10.1145/3087604.3087639}{Rational Points on
  the Unit Sphere: Approximation Complexity and Practical Constructions}}.
\newblock In {\em {Proc. of International Symposium on Symbolic and Algebraic
  Computation}}, {ISSAC '17}.

\bibitem{libdts2}
D.~Bahrdt and M.~P. Seybold.
\newblock {Libdts2 library on GitHub}.
\newblock \url{www.github.com/fmi-alg/libdts2}, 2017.

\bibitem{libratss}
D.~Bahrdt and M.~P. Seybold.
\newblock {Libratss library on GitHub}.
\newblock \url{www.github.com/fmi-alg/libratss}, 2017.

\bibitem{1996-Barber-QHULL}
C.~B. Barber, D.~P. Dobkin, and H.~Huhdanpaa.
\newblock {\href{http://doi.acm.org/10.1145/235815.235821}{The Quickhull
  Algorithm for Convex Hulls}}.
\newblock {\em {ACM} Trans. Math. Softw.}, 22(4):469--483, 1996.

\bibitem{1979-Brown}
K.~Q. Brown.
\newblock {\href{http://dx.doi.org/10.1016/0020-0190(79)90074-7}{Voronoi
  Diagrams from Convex Hulls}}.
\newblock {\em Inf. Process. Lett.}, 9(5):223--228, 1979.

\bibitem{DBLP:conf/compgeom/BurnikelFS98}
C.~Burnikel, S.~Funke, and M.~Seel.
\newblock {\href{http://doi.acm.org/10.1145/276884.276904}{Exact Geometric
  Predicates Using Cascaded Computation}}.
\newblock In {\em {Proc. of Sympos. on Comput. Geom.}}, 1998.

\bibitem{92-Canny-RatRot}
J.~Canny, B.~Donald, and E.~Ressler.
\newblock {\href{https://doi.org/10.1145/142675.142726}{A rational rotation
  method for robust geometric algorithms}}.
\newblock In {\em {Proc. of Sympos. on Comput. Geom.}}, 1992.

\bibitem{Teillaud2010}
M.~Caroli, P.~M.~M. de~Castro, S.~Loriot, O.~Rouiller, M.~Teillaud, and
  C.~Wormser.
\newblock {\href{http://dx.doi.org/10.1007/978-3-642-13193-6\_39}{Robust and
  Efficient Delaunay Triangulations of Points on Or Close to a Sphere}}.
\newblock In {\em {Proc. of Sympos. on Exp. Alg.}}, pages 462--473, 2010.

\bibitem{Chew:1987:CDT:41958.41981}
L.~P. Chew.
\newblock {\href{http://doi.acm.org/10.1145/41958.41981}{Constrained Delaunay
  Triangulations}}.
\newblock In {\em {Proc. of Sympos. on Comput. Geom.}}, pages 215--222, 1987.

\bibitem{2009-CastroCLT}
P.~M.~M. de~Castro, F.~Cazals, S.~Loriot, and M.~Teillaud.
\newblock {\href{http://dx.doi.org/10.1016/j.comgeo.2008.10.003}{Design of the
  {CGAL} 3D Spherical Kernel and application to arrangements of circles on a
  sphere}}.
\newblock {\em Comput. Geom.}, 42(6-7):536--550, 2009.

\bibitem{Fousse:2007:MMB:1236463.1236468}
L.~Fousse, G.~Hanrot, V.~Lef{\`e}vre, P.~P{\'e}lissier, and P.~Zimmermann.
\newblock {\href{http://doi.acm.org/10.1145/1236463.1236468}{MPFR: A
  Multiple-precision Binary Floating-point Library with Correct Rounding}}.
\newblock {\em ACM Trans. Math. Softw.}, 33(2), June 2007.

\bibitem{Granlund12}
T.~Granlund and {the GMP development team}.
\newblock {\em {{GNU MP}: {T}he {GNU} {M}ultiple {P}recision {A}rithmetic
  {L}ibrary}}, 5.0.5 edition, 2012.
\newblock \url{http://gmplib.org/}.

\bibitem{hardy54}
G.~H. Hardy and E.~M. Wright.
\newblock {\em {An introduction to the theory of numbers}}.
\newblock Oxford University Press., 1954.

\bibitem{gmd-6-1353-2013}
D.~W. Jacobsen, M.~Gunzburger, T.~Ringler, J.~Burkardt, and J.~Peterson.
\newblock {\href{http://www.geosci-model-dev.net/6/1353/2013}{Parallel
  algorithms for planar and spherical Delaunay construction with an application
  to centroidal Voronoi tessellations}}.
\newblock {\em Geosci. Model Dev.}, 6(4), 2013.

\bibitem{Jurkat1979}
W.~Jurkat, W.~Kratz, and A.~Peyerimhoff.
\newblock {\href{http://dx.doi.org/10.1007/BF01420334}{On best two-dimensional
  Dirichlet-approximations and their algorithmic calculation}}.
\newblock {\em Mathematische Annalen}, 244(1), 1979.

\bibitem{Kleinbock2015}
D.~Kleinbock and K.~Merrill.
\newblock {\href{http://dx.doi.org/10.1007/s11856-015-1219-z}{Rational
  approximation on spheres}}.
\newblock {\em Israel Journal of Mathematics}, 209(1):293--322, 2015.

\bibitem{Lenstra82factoringpolynomials}
A.~K. Lenstra, H.~W. Lenstra, and L.~Lovasz.
\newblock {\href{http://eudml.org/doc/182903}{Factoring polynomials with
  rational coefficients}}.
\newblock {\em MATH. ANN}, 261:515--534, 1982.

\bibitem{DBLP:journals/jlp/LiPY05}
C.~Li, S.~Pion, and C.~Yap.
\newblock {\href{http://dx.doi.org/10.1016/j.jlap.2004.07.006}{Recent progress
  in exact geometric computation}}.
\newblock {\em J. Log. Algebr. Program.}, 64(1):85--111, 2005.

\bibitem{1851-Liouville}
J.~Liouville.
\newblock {\href{http://catalogue.bnf.fr/ark:/12148/cb343487840}{Sur des
  classes tr{\`e}s-{\'e}tendues de quantit{\'e}s dont la valeur n{\rq}est ni
  alg{\'e}brique, ni m{\^e}me r{\'e}ductible {\`a} des irrationalles
  alg{\'e}briques}}.
\newblock {\em J. Math. pures et app.}, 16:133--142, 1851.

\bibitem{marsaglia1972choosing}
G.~Marsaglia.
\newblock {\href{http://projecteuclid.org/euclid.aoms/1177692644}{Choosing a
  point from the surface of a sphere}}.
\newblock {\em The Annals of Mathematical Statistics}, 43(2), 1972.

\bibitem{book/niven/irrationalNumbers}
I.~Niven.
\newblock {\em
  {\href{http://www.jstor.org/stable/10.4169/j.ctt5hh8zn}{Irrational
  Numbers}}}, volume~11.
\newblock Math. Assoc. of America, 1 edition, 1985.

\bibitem{Pethoe2017422}
A.~Peth\H{o}, M.~E. Pohst, and C.~Bert{\'o}k.
\newblock {\href{https://doi.org/10.1016/j.jnt.2016.07.002}{On multidimensional
  Diophantine approximation of algebraic numbers}}.
\newblock {\em J. Number Theory}, 171, 2017.

\bibitem{DBLP:conf/ppam/PrillZ15}
F.~Prill and G.~Z{\"a}ngl.
\newblock {\href{http://dx.doi.org/10.1007/978-3-319-32152-3\_33}{A Compact
  Parallel Algorithm for Spherical Delaunay Triangulations}}.
\newblock In {\em {Proc. Conf. on Parallel Processing and Appl. Math.}}, 2015.

\bibitem{cgal47}
T.~C. Project.
\newblock {\em {{CGAL} User and Reference Manual}}.
\newblock CGAL Editorial Board, 4.7 edition, 2015.

\bibitem{1997-Renka-STRIPACK}
R.~J. Renka.
\newblock {\href{http://doi.acm.org/10.1145/275323.275329}{Algorithm772
  STRIPACK Delaunay Triangulation and Voronoi Diagram on the Surface of a
  Sphere}}.
\newblock {\em Trans. Math. Softw.}, 23(3), 1997.

\bibitem{rivlin74}
T.~J. Rivlin.
\newblock {\em {The {Chebyshev} Polynomials: From Approximation Theory to
  Algebra and Number Theory}}.
\newblock {Tracts in Pure \& Applied Mathematics}. Wiley, 1974.

\bibitem{saalfeldStereoDelaunay}
A.~Saalfeld.
\newblock {\href{http://dx.doi.org/10.1559/152304099782294168}{Delaunay
  Triangulations and Stereographic Projections}}.
\newblock {\em Cartography and Geographic Information Science}, 26(4):289--296,
  1999.

\bibitem{Schmutz2008}
E.~Schmutz.
\newblock {\href{http://dx.doi.org/10.2478/s11533-008-0038-4}{Rational points
  on the unit sphere}}.
\newblock {\em Central European Journal of Mathematics}, 6(3), 2008.

\bibitem{tachii1994}
H.~Tachii.
\newblock {\href{http://dx.doi.org/10.3792/pjaa.70.317}{On Jacobi-Perron
  algorithm}}.
\newblock volume~70, pages 317--322. The Japan Academy, 1994.

\bibitem{osm}
{The OpenStreetMap Project}.
\newblock \url{www.openstreetmap.org}, 2017.

\end{thebibliography}
	
	\clearpage
	\appendix

	\section{Proof of Liouville's Approximation Theorem} \label{apxProofLiouville}
	This nice proof was translated from the German  \href{https://de.wikibooks.org/wiki/Beweisarchiv:_Algebra:_K%C3%B6rper:_Approximationssatz_von_Liouville}{wikibooks} Project -- thanks to the anonymous authors.
		See \cite{1851-Liouville} for the original proof in French language.
		
		\numberwithin{equation}{section}
		
		\begin{proof}[of Theorem \ref{thmLiouville}]
			Let $\alpha \in \R$ be algebraic of degree $n$ and root of the corresponding polynomial $f(X)\in\Z[X]$ of degree $n$, meaning
			\begin{align*}
			f(\alpha) = a_0 + a_1\alpha + \cdots +a_n\alpha^n = 0
			\end{align*} with $a_0, \ldots, a_n \in \Z$ and $a_n\ne 0$.
			
			Polynomial division with the linear factor $X-\alpha$ in the ring $\C[X]$ provides
			\begin{align}
			f(X) = (X-\alpha)\cdot g(X).	\label{apx:eqn:polyDivision}
			\end{align}
			
			Note that the polynomial $g(X)$ has algebraic coefficients and is not necessarily in $\Z[X]$.
			However, the mapping ${\R\to\C}$, $t\mapsto g(t)$ is continuous, by means of real numbers $c_1>0$, $c_2>0$ with
			\begin{align}
			\left|g(x)\right| \leq c_1	\label{apx:eqn:continous}
			\end{align}
			for $|\alpha-x| < c_2$.
			Since $n<\infty$, we can assume w.l.o.g. that no additional roots are in this neighborhood of $\alpha$, meaning
			\begin{align}
			f(x)\ne 0	\label{apx:eqn:emptyInterval}
			\end{align}
			for $|\alpha-x| < c_2$ and $x\ne \alpha$.
			
			Claim: The statement of the Theorem holds for
			\begin{align*}
			c:=\min\left\{c_2, \frac{1}{c_1}\right\} \quad.
			\end{align*}
			
			Suppose there are $p,q\in\Z$, $q>0$ with
			\begin{align}
			\left|\alpha-\frac pq\right | < \frac c{q^n} \quad. \label{apx:eqn:goodApx}
			\end{align}
			We show that his implies $\alpha=\tfrac pq$.
			
			From (\ref{apx:eqn:goodApx}), we immediately derive
			\begin{align}
			\left|\alpha-\frac pq\right| < c \leq c_2 ~, \label{apx:eqn:start}
			\end{align}
			leading (\ref{apx:eqn:continous}) to imply $\left| g(\tfrac pq) \right | \leq c_1$.
			We derive, from (\ref{apx:eqn:polyDivision}) and again (\ref{apx:eqn:goodApx}), that
			\begin{align*}
			\left|f\Bigl(\frac pq\Bigr) \right| = \left|\frac pq -\alpha\right|\cdot \left|g\Bigl(\frac pq\Bigr)\right| < \frac {c}{q^n} \cdot c_1 \leq \frac 1{q^n},
			\end{align*}
			meaning $\left|q^n \cdot f\Bigl(\frac pq\Bigr)\right| < 1.$
			
			However $q^n\cdot f\Bigl(\frac pq\Bigr) = a_0q^n + a_1pq^{n-1} + \cdots + a_n p^n\in\Z$ and its absolute value is smaller than $1$, hence has to be $0$.
			Moreover, $f(\tfrac pq) = 0$ and (\ref{apx:eqn:start}) with (\ref{apx:eqn:emptyInterval}) imply $\alpha=\tfrac pq$, which closes the argument.
		\end{proof}
		
		\section{Proof of Dirichlet's Approximation Theorem}\label{apxProofDirichlet}
		The folklore proof bases on Dirichlet's famous Pigeonhole Principle.
		See \href{https://proofwiki.org/wiki/Dirichlet%27s_Approximation_Theorem}{proofwiki.org} or Chapter 11.12 in \cite{hardy54}.
			\begin{proof}[of Theorem \ref{thmDirichlet}]
				We consider the partition of $[0,1]^d$ in $N^d$ regular $d$-cubes of length $L=\sqrt[d]{N}$.
				We further define a sequence of points $(a^{(j)})_{j=1,\ldots, N^d+1} \in [0,1]^d$ with $a^{(j)} := j \cdot \alpha - \lfloor j \cdot \alpha \rfloor $ (component wise operations).
				There are indices $k > l$ such that the points $a^{(k)}$ and $a^{(l)}$ are contained in the same $d$-cube.
				We have the (component wise) inequalities
				\begin{align*}
				-\frac{1}{L} < a^{(k)} &- a^{(l)} < \frac{1}{L} \\
				-\frac{1}{L} < k\alpha - \lfloor k \alpha \rfloor &- l\alpha+\lfloor l \alpha \rfloor < \frac{1}{L} \\
				-\frac{1}{L} < (k-l)\alpha &- (\lfloor k \alpha \rfloor  - \lfloor l \alpha \rfloor ) < \frac{1}{L} \\
				\end{align*}
				Setting $q=k-l$ and $p_i = \lfloor k\alpha_i\rfloor - \lfloor l\alpha_i\rfloor$ provides integers as required.
			\end{proof}

			\section{Reductions of Spherical Predicates to Cartesian Orientation Predicates}\label{apxPredRed}
			We first describe a reduction from the spherical predicates to well studied Cartesian predicates.
			
			\begin{lemma}[Great Circle Orientation Predicate]
				Let $p_1, p_2\in \mathbb{S}^2$ with $p_1 \neq p_2$ and $P$ the plane containing $p_1,p_2$ and the origin $(0,0,0)$ and $C$ be the Great Circle through $p_1$ and $p_2$. For $q \in \mathbb{S}^2$ we have 
				\begin{align*}
				q \text{ left-of } P &\iff q \text{ left-of } C\\
				q \in P &\iff q \in C\\
				q \text{ right-of } P &\iff q \text{ right-of } C\\
				\end{align*}
			\end{lemma}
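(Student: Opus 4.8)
The plan is to set up coordinates so that the plane $P$ is exactly the hyperplane that the "left-of" relation for $P$ is measured against, and then observe that for a point $q$ on the unit sphere, the Great Circle $C = P \cap \S^2$ divides $\S^2$ into two open hemispheres which are precisely the intersections of $\S^2$ with the two open half-spaces bounded by $P$. Concretely, I would write $P = \{x \in \R^3 : \langle n, x\rangle = 0\}$ for a normal vector $n$ determined by $p_1, p_2$ and the origin (e.g. $n = p_1 \times p_2$, which is non-zero since $p_1 \neq p_2$ and neither is the origin, so $p_1, p_2$ are linearly independent as vectors). The orientation predicate "$q$ left-of $P$" is by definition the sign of $\langle n, q\rangle$ (equivalently the sign of the $3\times 3$ determinant $\det[p_1, p_2, q]$), and "$q$ left-of $C$" must be given the same definition, since a Great Circle on $\S^2$ has no intrinsic Cartesian orientation test other than the one inherited from its carrier plane.

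The key steps, in order: (1) fix the definitions — state that "left-of $P$" means $\langle n,q\rangle > 0$, "$q \in P$" means $\langle n,q\rangle = 0$, "right-of $P$" means $\langle n,q\rangle < 0$, with $n$ as above, and that the spherical predicates "left-of $C$", "$q \in C$", "right-of $C$" are *defined* to be exactly these same three sign conditions (this is the natural and standard reduction, and is really the content of the lemma as a definitional bridge); (2) verify $C = P \cap \S^2$ is genuinely a Great Circle — it is the intersection of $\S^2$ with a plane through the origin, and it passes through $p_1$ and $p_2$ since those lie on both $\S^2$ and $P$; (3) conclude the three equivalences: they are immediate once both sides are unfolded to the same sign condition on $\langle n, q\rangle$, and the trichotomy $\langle n,q\rangle > 0$, $=0$, $<0$ is exhaustive and mutually exclusive, so the three biconditionals together are just a restatement of that trichotomy. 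One should also note that $C$ is non-degenerate (two distinct points $p_1 \neq p_2$ on it, and it is a full circle of radius $1$ since $P$ passes through the center), so "left-of $C$" is well-posed.

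The main obstacle — and it is a presentational rather than a mathematical one — is pinning down that the two notions of orientation genuinely coincide at the level of definitions, i.e. that there is no independent, intrinsic notion of "which side of a Great Circle" that could disagree with the half-space test. The cleanest route is to declare up front that the spherical side-predicates are, by definition, evaluated via the carrying plane $P$ (which is exactly how an implementation would compute them, and matches the surrounding paper's remark that "$q$ in circumcircle of $B,C,D$" reduces to "$q$ above the plane through $B,C,D$"); once that is granted, the lemma is a one-line consequence of the trichotomy of the sign of $\langle n,q\rangle$. If instead one wants an intrinsic definition of the hemispheres cut out by $C$, the extra step is to check that the geodesic-connectedness components of $\S^2 \setminus C$ are exactly $\S^2 \cap \{x : \langle n,x\rangle > 0\}$ and $\S^2 \cap \{x : \langle n,x\rangle < 0\}$ — which follows because $\S^2 \setminus C$ has exactly two connected components (the plane $P$ separates $\R^3$ into two half-spaces and $\S^2$ meets each in a connected open hemisphere) and these two open hemispheres are the claimed sets. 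Either way, no hard estimate is involved; the work is entirely in fixing conventions.
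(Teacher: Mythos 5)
Your proof is correct and takes essentially the same approach as the paper's (one-line) proof, which simply observes that $\S^2\cap P = C$ and that $\S^2$ decomposes as $L\cup C\cup R$ according to the sign of the plane predicate. Your version merely spells out the normal vector $n=p_1\times p_2$, the trichotomy on $\langle n,q\rangle$, and the definitional identification of the spherical side-predicate with the planar one — all implicit in the paper's argument.
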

			\begin{proof}
				$\mathbb{S}^2 \cap P = C$ and $\mathbb{S}^2= L \cup C \cup R$.
			\end{proof}
			
			\begin{lemma}[Circumsphere Predicate]
				Let $P$ denote the plane through non-identical points $p_1, p_2, p_3 \in \mathbb{S}^2$ and the half space containing the origin $(0,0,0)$ is called `below $P$'.
				We further call $S_{123} \subseteq \R^3$ the closed volume of the sphere with $p_1,p_2,p_3$ and the origin on its surface. For a point $q \in \mathbb{S}^2$ we have
				\begin{align*}
				q \text{ above } P &\iff  q \in S_{123} \setminus \partial S_{123}	\\
				q \in P 		   &\iff  q \in \partial S_{123} 					\\
				q \text{ below } P &\iff  q \notin S_{123} 							
				\end{align*}
			\end{lemma}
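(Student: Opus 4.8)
The plan is to run everything through the \emph{power of a point} with respect to the two spheres $\mathbb{S}^2$ and $\Sigma := \partial S_{123}$, recognizing $P$ as their radical plane; this collapses all three equivalences into one sign comparison and avoids any case analysis over which spherical cap $q$ sits in. For notation: the $p_i$ are distinct points of $\mathbb{S}^2$, hence not collinear (a line meets $\mathbb{S}^2$ in at most two points), so they span a unique plane $P$, and together with the origin they determine the unique sphere $\Sigma$ — I take as a standing non-degeneracy assumption that the origin is not on $P$, which is exactly what the phrase ``the half space containing the origin'' and the use of ``\emph{the}'' sphere $S_{123}$ already presuppose. Write $u$ for the center and $\rho$ for the radius of $\Sigma$. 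Since $0 \in \Sigma$, expanding $\lVert 0-u\rVert_2^2 = \rho^2$ gives the identity $\lVert u\rVert_2^2 = \rho^2$.

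Next I would evaluate, for an arbitrary $q \in \mathbb{S}^2$, the power $\lVert q - u\rVert_2^2 - \rho^2$. Expanding the square and using $\lVert q\rVert_2 = 1$ together with $\lVert u\rVert_2^2 = \rho^2$ collapses this to $1 - 2\,(q\cdot u)$. Hence $q$ lies strictly inside, on, or strictly outside $\Sigma$ precisely according to whether $q\cdot u$ is $>\tfrac12$, $=\tfrac12$, or $<\tfrac12$; equivalently $q \in S_{123}\setminus\partial S_{123} \iff q\cdot u > \tfrac12$, and $q\in\partial S_{123}\iff q\cdot u=\tfrac12$, and $q\notin S_{123}\iff q\cdot u<\tfrac12$.

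Then I would identify the plane $\{x : x\cdot u = \tfrac12\}$ with $P$. The circle $C = \mathbb{S}^2\cap\Sigma$ contains $p_1,p_2,p_3$ since each $p_i$ lies on both spheres, and every $x\in C$ satisfies $\lVert x\rVert_2^2 - 1 = 0 = \lVert x-u\rVert_2^2 - \rho^2$, which simplifies to $x\cdot u = \tfrac12$; so the three non-collinear points $p_i$ lie on $\{x : x\cdot u=\tfrac12\}$, forcing this plane to be $P$. Since $0\cdot u = 0 < \tfrac12$, the half space ``below $P$'' (the one containing the origin) is $\{x : x\cdot u < \tfrac12\}$ and ``above $P$'' is $\{x : x\cdot u > \tfrac12\}$; substituting these descriptions into the trichotomy of the previous paragraph yields the three stated equivalences verbatim.

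The step I expect to need the most care is not any of the computations, which are routine, but the well-definedness of $S_{123}$: if the origin happens to lie on $P$ — equivalently, if $C$ is a great circle of $\mathbb{S}^2$ — then infinitely many spheres pass through $\{0,p_1,p_2,p_3\}$ and the predicate is genuinely degenerate. I would therefore either state the non-coplanarity of $\{0,p_1,p_2,p_3\}$ as an explicit hypothesis or observe that it is already implied by whatever non-degeneracy makes the orientation ``above/below $P$'' meaningful.
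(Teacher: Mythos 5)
Your argument is correct and takes a genuinely different route from the paper's. The paper works geometrically: it identifies $C = \mathbb{S}^2 \cap \partial S_{123}$ as a common circle lying in $P$, splits $\mathbb{S}^2$ into two caps with one cap $A \subseteq S_{123}$, and appeals to a convexity argument (slicing the lens $D = S \cap S_{123}$ by the closed half-space above $P$, and invoking the claim that a plane ``cuts a convex body into at most three parts'') to conclude $A$ sits entirely above $P$. You instead compute the power of $q \in \mathbb{S}^2$ with respect to $\Sigma = \partial S_{123}$, reduce it to $1 - 2\, q\cdot u$ via $\lVert q\rVert_2 = 1$ and $\lVert u\rVert_2^2 = \rho^2$ (the latter encoding $0 \in \Sigma$), and identify $P$ as the radical plane $\{x : x\cdot u = \tfrac12\}$; the three equivalences then all reduce to one sign comparison. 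Your route is shorter, fully quantitative, dimension-agnostic, and avoids the informal convexity step. It is also worth noting that $u \neq 0$ is automatic whenever $\Sigma$ exists (otherwise $\rho = 0$ and $\Sigma$ would be a single point containing $p_1 \neq 0$), so the plane $\{x : x\cdot u = \tfrac12\}$ is always well-defined.

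One small inaccuracy in your closing degeneracy remark: when the origin lies on $P$ (equivalently $C$ is a great circle), the origin is the \emph{center} of $C$, not a point of $C$, so $0, p_1, p_2, p_3$ are coplanar but \emph{not} concyclic. In that case \emph{no} sphere passes through all four points, rather than infinitely many. Your conclusion that $S_{123}$ is then ill-defined is still correct, and you are right that this is a hypothesis the statement tacitly assumes; the paper's clause about ``resolving ambiguity for the center of $S_{123}$'' in the great-circle case papers over the fact that there is no such sphere to have a center.
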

			
			\begin{proof}
				$P$ is uniquely determined because three different points on the unit sphere are not co-linear.
				Since $S_{123}$ and $\mathbb{S}^2$ are spheres, their cuts with $P$ are circles in $P$ and the two circles are identical as they contain $p_1,p_2$ and $p_3$ on their boundary.
				This circle $C$ has a radius of at most $1$ and partitions the points of the unit sphere into three sets $$\mathbb{S}^2 = A \cup C \cup B$$ where $A \subseteq S_{123} \supsetneq B$.
				If $C$ is a Great Circle we resolve ambiguity for `above' and the center of $S_{123}$ by choosing the open half spaces that first contain $(0,0,1)$, then $(0,1,0)$ and eventually $(1,0,0)$.
				We have $\mathbb{S}^2 \neq \partial S_{123}$, since the origin is a fourth point on $S_{123}$ and $q \in P$ iff. $q \in C$ iff. $q \in \partial S_{123}$.
				Therefore it is sufficient to show $q \in A	\iff q \text{ above } P $.
				To this end we consider the convex volume of the unit sphere $S \subseteq \R^3$ and $ D = S \cap S_{123}$. Note that $\partial D$ contains $C$ and $A$.
				Since the cut with the closed half-space of the plane $P$ cuts a convex body into at most three parts and $C \subseteq P$, we have that all of $A$ is `above' $P$.
			\end{proof}
			
			\begin{figure*}[b]
				\centering
 				\includegraphics[clip,width=1.7\columnwidth]{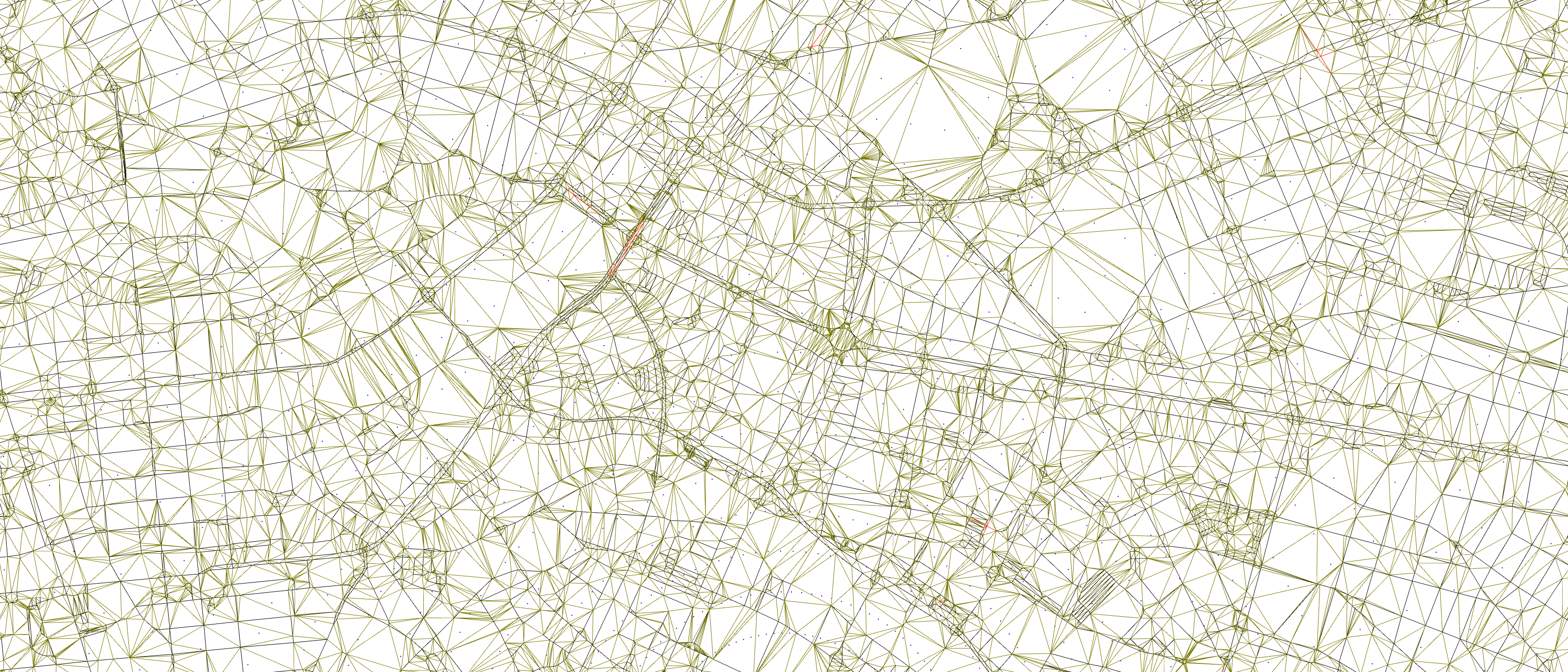}\\
 				\includegraphics[clip,width=1.7\columnwidth]{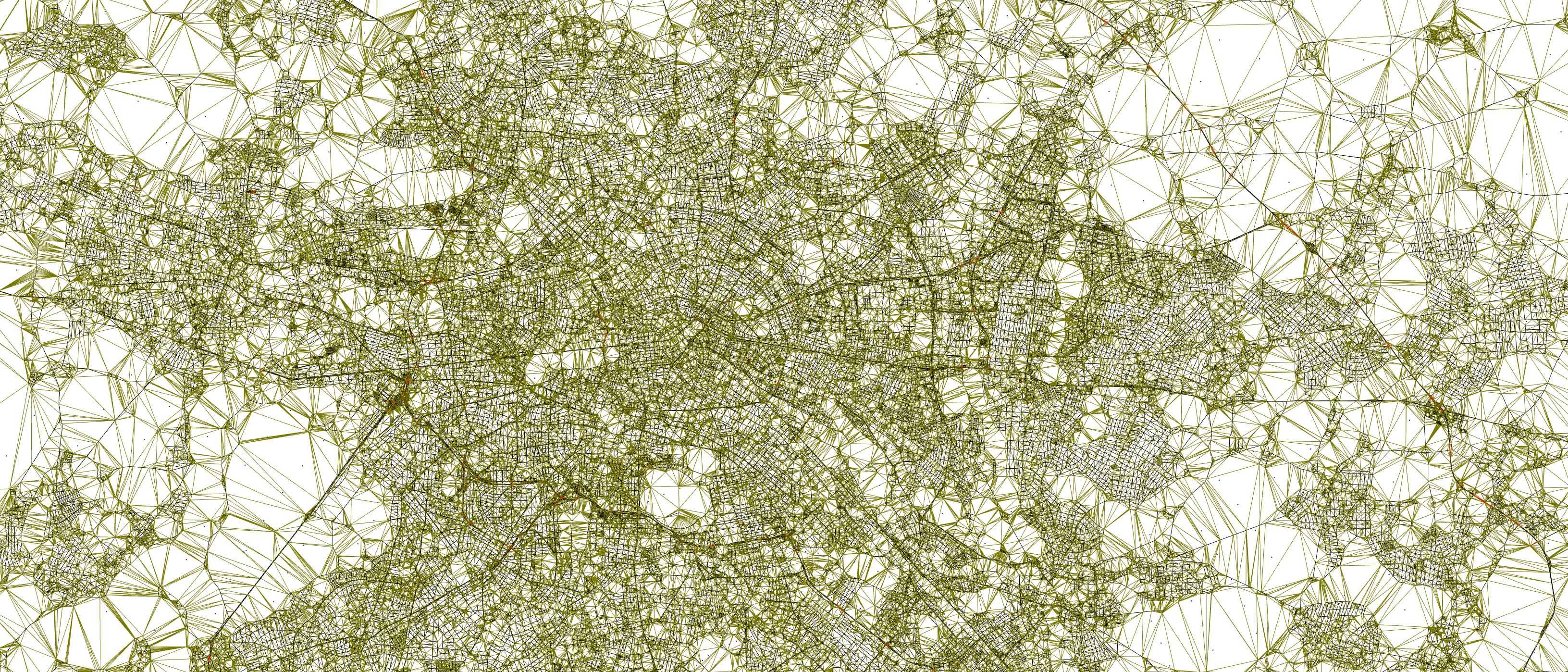}\\
 				\includegraphics[clip,width=0.9\columnwidth]{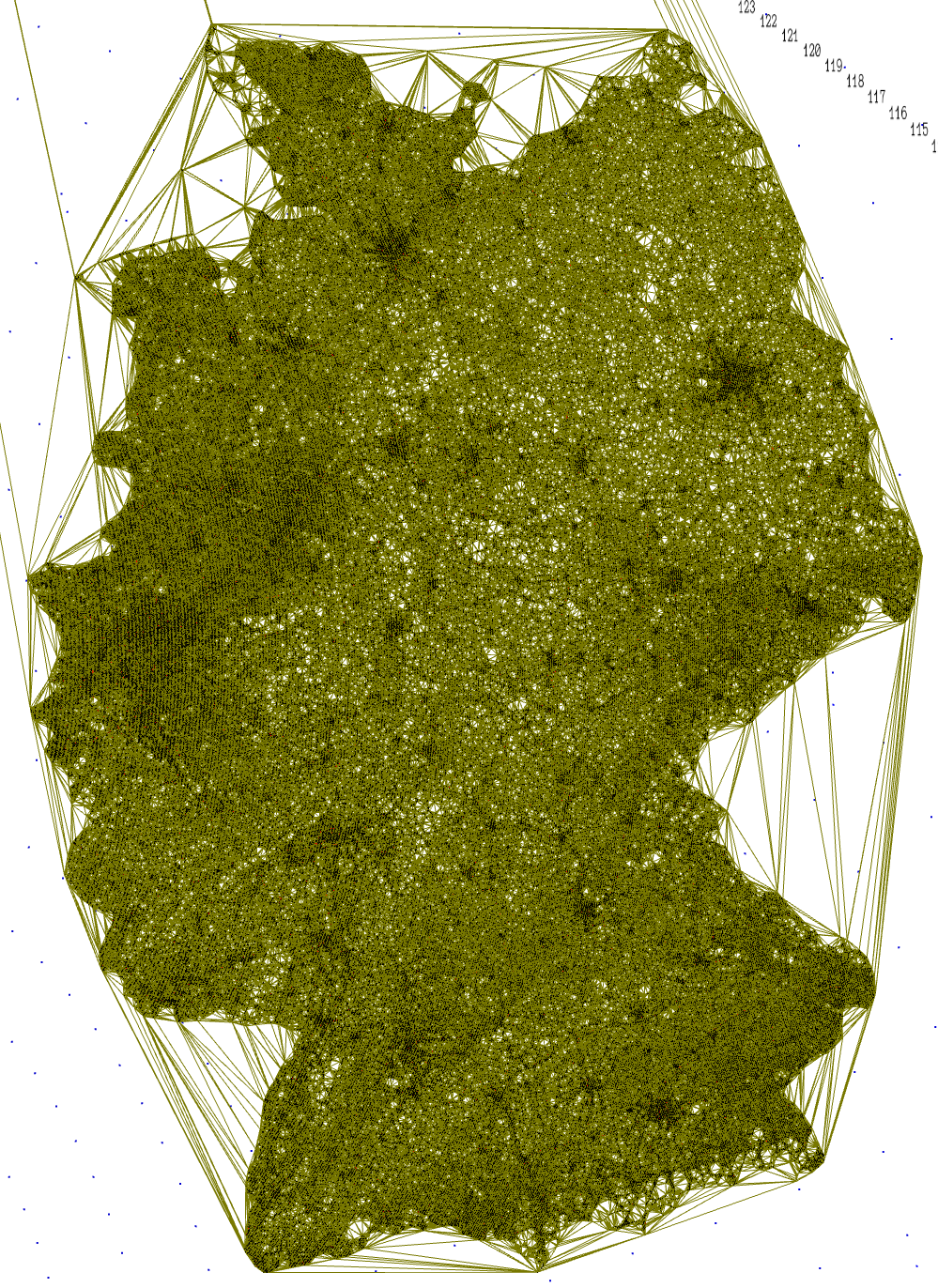}
 				\includegraphics[clip,width=0.9\columnwidth]{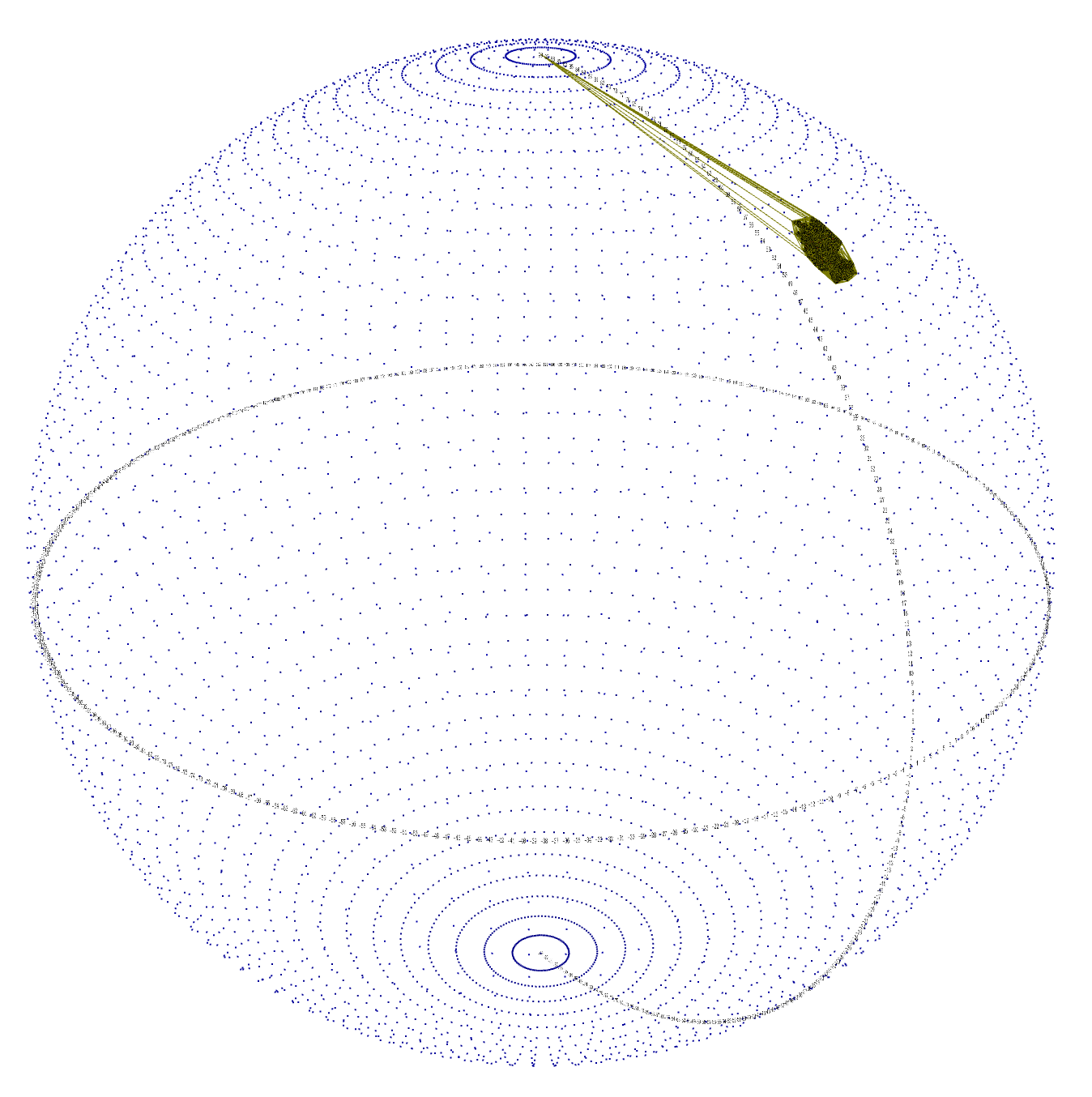}
				\caption{Spherical constraint Delaunay triangulation(green) of all streets(black) in the Germany data set (c.f. Section \ref{secConstrDelaunayWithInter}).}\label{fig:dtimages}
			\end{figure*}

		\end{document}